\newtheorem{proof}{Proof}
\newcommand{\e}{\mbox{$\mathbb E$}}
\newcommand{\oo}{\mbox{$\mathbb O$}}
\newcommand{\x}{\mbox{$\mathbf x$}}
\newcommand{\y}{\mbox{$\mathbf y$}}
\newcommand{\vv}{\mbox{$\mathbf v$}}
\newcommand{\aaa}{\mbox{$\cal A$}}
\newcommand{\f}{\mbox{$\cal F$}}
\newcommand{\p}{\mbox{$\cal P$}}
\newcommand{\q}{\mbox{$\cal Q$}}
\newcommand{\rrr}{\mbox{$\cal R$}}
\newcommand{\po}{\mbox{$\it\Phi$}}
\newcommand{\up}{\mbox{$\it\Upsilon$}}
\newcommand{\qa}{\mbox{\quad\mbox{and}\quad}}
\newcommand{\bfF}{\mbox{\boldmath $F$}}
\newcommand{\bfX}{\mbox{\boldmath $X$}}
\newcommand{\bfY}{\mbox{\boldmath $Y$}}
\newcommand{\rt}{\mbox{$\mathbb R$}}
 \def\diag{\mathop{{\rm diag}}\nolimits}
\newtheorem{theorem}{Theorem}
\newtheorem{corollary}{Corollary}
\newtheorem{lemma}{Lemma}
\newtheorem{example}{Example}
\newtheorem{remark}{Remark}
\begin{document}
%
\title{Compression and Recovery of Distributed Random Signals}
%
%
%

\author{ Alex~Grant, Anatoli~Torokhti,
        and~Pablo~Soto-Quiros
\thanks{Alex Grant was with the Institute for Telecommunications Research, University of South Australia, SA 5095, Australia.}
\thanks{Anatoli Torokhti is with the Centre for Industrial and Applied Mathematics, University of South Australia, SA 5095, Australia (e-mail: anatoli.torokhti@unisa.edu.au).}
\thanks{Pablo Soto-Quiros is with the Centre for Industrial and Applied Mathematics, University of South Australia, SA 5095, Australia and the Instituto Tecnologico de Costa Rica, Apdo. 159-7050, Cartago, Costa Rica (e-mail: juan.soto-quiros@mymail.unisa.edu.au).}
\thanks{Manuscript received XXXX XX, 2015; revised XXXX XX, 2015.}}


\maketitle

\begin{abstract}
We consider the case when a set of spatially distributed sensors  $\mathcal{Q}_1, \ldots,\mathcal{Q}_p$ make local  observations, ${\bf y}_1,\ldots,{\bf y}_p$, which are noisy versions of a signal of interest, ${\bf x}$. Each sensor  $\mathcal{Q}_j$ transmits compressed information ${\bf u}_j$ about its measurements to the fusion center which should recover the original signal within a prescribed accuracy. Such an information processing relates to a wireless sensor network (WSN) scenario.
 The key problem is to find models of the sensors and fusion center so that they  will be optimal in the sense of minimization
of the associated error under a certain criterion, such as the mean square error (MSE).
We determine the models from the technique which is  a combination of the  maximum block improvement  (MBI) method \cite{chen87,Zhening2015} and  the generic   Karhunen-Lo\`{e}ve transform (KLT) \cite{torbook2007} (based on the work in \cite{Torokhti2007,tor5277}). Therefore, the proposed method unites  the merits of both techniques   \cite{chen87,Zhening2015} and \cite{torbook2007,Torokhti2007,tor5277}. As a result, our approach provides, in particular,
the minimal MSE at each step of the version of the MBI method we use. The WSN model is represented in the form called the  multi-compressor  KLT-MBI transform.  The multi-compressor  KLT-MBI is given in terms of pseudo-inverse matrices and, therefore, it is
numerically stable and always exists. In other words, the proposed WSN  model provides compression, de-noising and reconstruction of distributed signals for the cases  when known  methods either are not applicable (because of singularity of associated matrices)  or produce  larger associated errors. Error analysis is provided.
\end{abstract}

\begin{IEEEkeywords}
Karhunen-Lo\`{e}ve transform, singular value decomposition, rank-reduced matrix approximation.
\end{IEEEkeywords}

\IEEEpeerreviewmaketitle

\section{Introduction}

\subsection{Motivation}
   \IEEEPARstart{W}{e}  seek to find effective numerical algorithms for an information processing
scenario that involves a set of spatially distributed sensors, $\mathcal{Q}_1,\ldots,\mathcal{Q}_p$, and a fusion
center, $\p$. The sensors make local  observations, ${\bf y}_1,\ldots,{\bf y}_p$, which are noisy versions of a signal
of interest, ${\bf x}$. Each sensor  $\mathcal{Q}_j$ transmits compressed information ${\bf u}_j$ about its
measurements to the fusion center which should recover the original signal
within a prescribed accuracy. Such an information processing relates to a wireless sensor
network (WSN) scenario. In the recent years, research and development on new and
refined WSN techniques has increased at a remarkable rate (see, for example, \cite{dragotti2009,5447742,5504834,Bert2010_1,6334203,ma3098,Marelli201527}). This is, in particular, because
of a multitude of WSN applications due to their low deployment and maintenance cost.

\begin{figure}
  \centering
  \includegraphics[width=9cm]{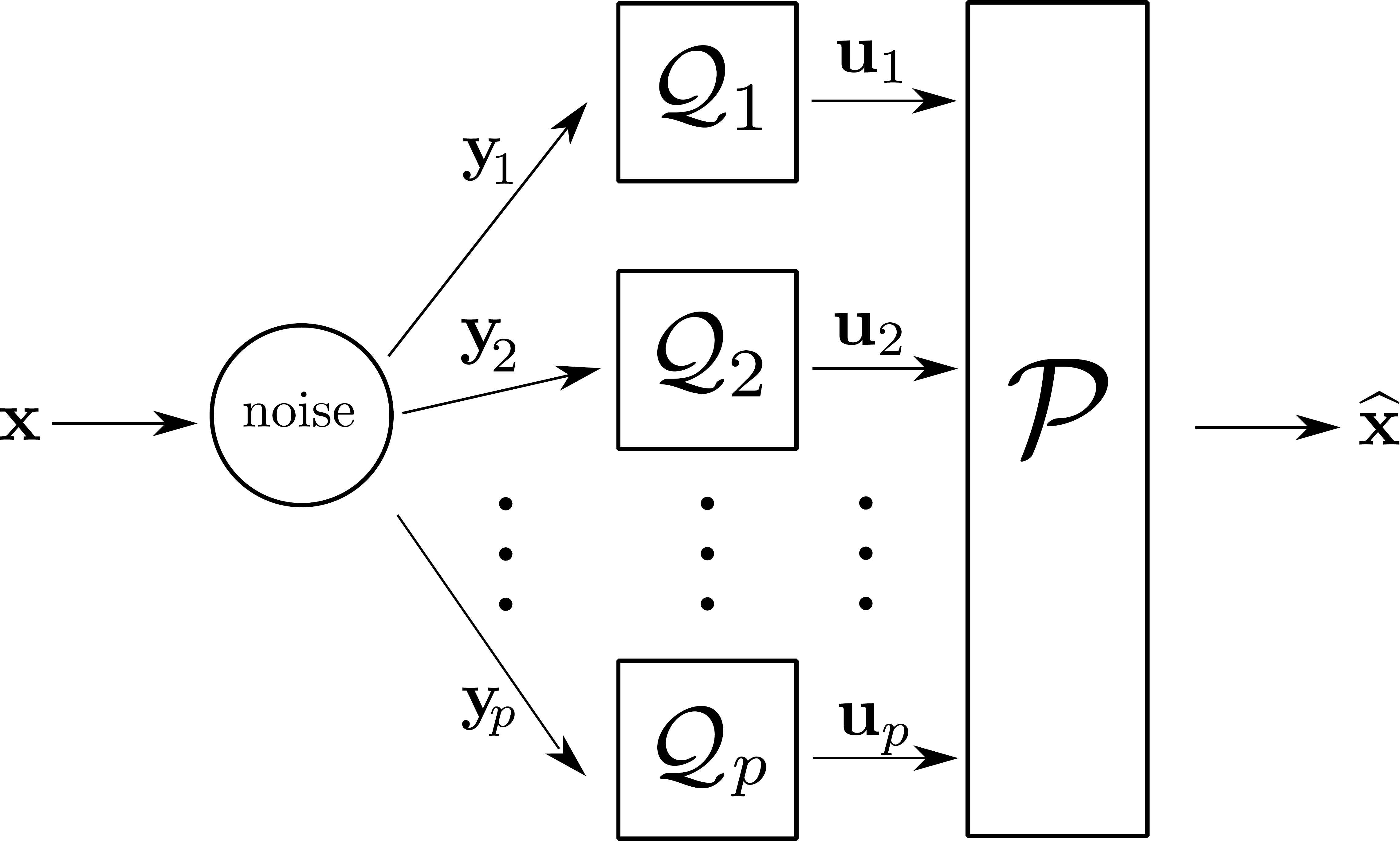}\\
  \caption{Block diagram of the WSN. Here,  $\widehat{\bf x}$ is an estimation of ${\bf x}$.}
  \label{fig1}
\end{figure}

The key problem is to to find an effective way to {\it compress} and denoise each observation
${\bf y}_j$, where $j = 1,\ldots, p$, and then {\it reconstruct} all the compressed observations in the
fusion center so that the reconstruction will be optimal in the sense of a minimization
of the associated error under a certain criterion, such as the mean square error (MSE).
A restriction is that the sensors cannot communicate with each other. Here, the term
``compression'' is treated in the same sense as in the known works on data compression
(developed, for instance, in \cite{Scharf1991113,681430,905856,Torokhti20102822}), i.e. we say that observed signal ${\bf y}_j$ with $n_j$
components is compressed if it is represented as signal ${\bf u}_j$ with $r_j$ components where $r_j < n_j$,
for $j = 1,\ldots, p$. That is ``compression'' refers to dimensionality reduction and not quantization
which outputs bits for digital transmission. This is similar to the way considered, in particular, in \cite{4276987}.

\subsection{ Known techniques} \label{known}

 It is known that in the nondisrtibuted setting (in the other words,
in the case of a single sensor only) the MSE optimal solution is provided by the Karhunen-Lo\`{e}ve
transform (KLT) \cite{Scharf1991113,681430,2001309,torbook2007}. Nevertheless, the classical KLT cannot be applied to the above
WSN since the entire data vector ${\bf y} = [{\bf y}_1^T ,\ldots, y_p^T]^T$ is not observed by each sensor.
Therefore, several approaches to a determination of mathematical models for $\mathcal{Q}_1 ,\ldots, \mathcal{Q}_p$
and $\p$ have been pursued. In particular, in the information-theoretic context, distributed
compression has been considered in the landmark works of Slepian and Wolf \cite{1055037}, and
Wyner and Ziv \cite{1055508}. A transform-based approach to distributed compression and the
subsequent signal recovery has been considered in \cite{ma3098,4276987,Song20052131, 1420805, 4016296,  4475387}. Intimate relations
between these two perspectives have been shown in \cite{952802}. The methodology developed in
\cite{4276987,Song20052131, 1420805, 4016296,  4475387} is based on the dimensionality reduction by linear projections. Such an approach has received considerable attention (see, for example, \cite{dragotti2009, 5447742, 5504834,6334203, Schizas2007,Saghri2010}).

In this paper, we consider a further extension of the methodology studied in \cite{4276987,Song20052131, 1420805, 4016296,  4475387}.
In particular, in \cite{4276987}, two approaches are considered. By the first approach, the
fusion center model, $\p$, is given in the form $\p = [\p_1,\ldots,\p_p]$ where $\p_j$ is a `block' of
$\p$, for $j = 1,\ldots, p$, and then the original MSE cost function is represented as a sum
of $p$ decoupled MSE cost functions. Then approximations to $\mathcal{Q}_j$ and $\p_j$ are found as
solution to each of the $p$ associated MSE minimization (MSEM) problems. We observe
that the original MSE cost function and the sum of $p$ decoupled MSE cost functions
are not equivalent. This is because the covariance matrix cannot be represented in a
block diagonal form in the way presented in \cite{4276987}. Some more related explanations can
be found in \cite{Torokhti2012}. Therefore, the first approach in \cite{4276987} leads to the corresponding increase
in the associated error. The second approach in \cite{4276987} generalizes the results in \cite{Song20052131, 1420805}
in the following way. The original MSE cost function is represented, by re-grouping its
terms, in the form similar to that presented by a summand in the decoupled MSE cost
function. Then the minimum is seeking for each $\p_j\mathcal{Q}_j$ , for $j = 1,\ldots,p$, while other
terms $\p_k\mathcal{Q}_k$, for $k = 1,\ldots,j - 1, j + 1,\ldots, p$, are assumed to be fixed. The minimizer
follows from the known result given in \cite{Brillinger2001} (Theorem 10.2.4). While the original MSEM
problem is stated for a simultaneous determination of $\mathcal{Q}_1,\ldots,\mathcal{Q}_p$ and $\p$, the approach
in \cite{4276987} requires to solve $p$ local MSEM problems which are not equivalent to the original
problem. To combine solutions of those $p$ local MSEM problems, approximations to each
sensor model $\mathcal{Q}_j$ are determined from an iterative procedure. Values of $\mathcal{Q}_1,\ldots,\mathcal{Q}_p$ for
the initial iteration are chosen randomly.

The approach in \cite{4475387} is based on the ideas similar to those in the earlier references \cite{4276987,Song20052131, 1420805,4016296}\footnote{In particular,
it generalizes work \cite{4016296} to the case when the vectors of interest
do not to be directly observed at the sensors.}, i.e. on the replacement of the original MSEM problem  with the $p + 1$ unconstrained MSEM
problems for separate determination of approximations to $\mathcal{Q}_j$ , for each $j=1,\ldots, p$, and
then an approximation to $\p$. First, an approximation to each $\mathcal{Q}_j$ , for $j = 1,\ldots, p$, is
determined under assumption that other $p-1$ sensors are fixed. Then, on the basis of
known approximations to $\mathcal{Q}_1,\ldots,\mathcal{Q}_p$, an approximation of $\p$ is determined as the optimal
Wiener filter. Those $p + 1$ problems considered in \cite{4475387} are not equivalent to the original
problem. In \cite{4475387}, the involved signals are assumed to be zero-mean jointly Gaussian
random vectors. Here, this restriction is not used.

The work in \cite{Song20052131, 1420805,4016296} can be considered as a particular case of \cite{4475387}.

{ The method in \cite{ma3098} is applied to the problem which is an approximation of the original problem. It implies an increase in the associated error  compared to the method applied to the original problem. Further, it  is applicable under certain restrictions imposed on  observations and associated covariance matrices. In particular, in \cite{ma3098}, the observations should be  presented in the special form ${\bf y}_j = H_j \x + \vv_j$, for $j=1,\ldots,p$ (where $H_j$ is a measurement matrix and $\vv_j$ is noise),  and the covariance matrix formed by the noise vector should be {\em block-diagonal} and invertible. It is not the case here.
 }

\subsection{ Differences from known methods. Novelty and Contribution}\label{differences}

The WSN models in \cite{ma3098,4276987,Song20052131, 1420805, 4016296,  4475387} are  justified  in terms of inverse matrices.
It is known that in the cases when the matrices are close to singular this may lead to instability and
significant increase in the associated error. Moreover, when the matrices are singular,
the algorithms \cite{4276987,Song20052131, 1420805, 4016296, 4475387} may not be applicable.
This observation is illustrated by Examples \ref{ex1}, \ref{ex2} and  \ref{ex3} in Section \ref{simulations} below where  the associated matrices are singular and the method  \cite{4276987} is not applicable.
Although  in \cite{4475387}, for the case when a matrix is singular, it is proposed  to replace its inverse by the pseudo-inverse, such a simple  replacement
does not follow from the justification of the  model provided in \cite{4475387}. As a result, a simple substitution of the pseudo-inverse matrices instead of the inverse matrices may lead to the numerical instability as it is shown, in particular, in  Example \ref{ex3} in Section \ref{simulations}.
In this regard, we also refer to references \cite{torbook2007,Torokhti2007,Torokhti2009661} where the case of rank-constrained data compression in terms of the pseudo-inverse matrices  is studied.

Thus, methods in  \cite{4276987,Song20052131, 1420805, 4016296,  4475387,ma3098} are {\em justified} only for full rank matrices used in the associated models. This is not the case here. On the basis of the methodology developed in  \cite{torbook2007,Torokhti2007,Torokhti2009661}, our method is rigorously justified for models with matrices of degenerate ranks (Sections \ref{greedy} and \ref{appendix}).

{ Further, the second approach in \cite{4276987} is, in fact, the block coordinate descent (BCD) method \cite{Tseng2001}. The BCD method  converges (to a local minimum) under the assumptions that the objective function and the space of optimization need to be convex or the minimum of the objective function is {\em uniquely} attended (see \cite{Tseng2001} and \cite{bertsekas1995nonlinear}, p. 267). Those conditions are not satisfied for our method (as for the method  in \cite{4276987} as well).  Unlike the BCD method  \cite{Tseng2001} used in \cite{4276987}, the maximum block improvement  (MBI) method  \cite{chen87,Zhening2015} avoids the requirements of the BCD method. The MBI method guaranties convergence to a coordinate-wise minimum point which is a local minimum of the objective function.
Therefore, the approach proposed in this paper is based,  in particular, on the idea of the MBI method.

As distinct from the technique in \cite{ma3098} our method is applied to the original  minimization problem, not to  an approximation of the original problem as in \cite{ma3098}. It allows us to avoid the  increase in the associated error. We also do not impose any of the restrictions on our method as those in \cite{ma3098}  mentioned in Section \ref{known}  above. In particular, we do not assume that the covariance matrix formed by the noise vector should be  block-diagonal.
}

The methods in \cite{4276987,Song20052131, 1420805, 4016296, 4475387} have been developed under assumption that exact covariance matrices
are known. A knowledge of exact covariance matrices might be a restrictive condition in
some cases, in particular, when matrices are large.
In this paper, these difficulties are mitigated to some extent.

Key advantages of the proposed method are as follows. {The method represents a combination of the generic   Karhunen-Lo\`{e}ve
transform (KLT) \cite{torbook2007} (based on the work in \cite{tor5277}) and the MBI method \cite{chen87,Zhening2015}. Therefore, it unites  the merits of both techniques \cite{torbook2007,tor5277} and  \cite{chen87,Zhening2015}. As a result, our approach provides, in particular,
the minimal MSE at each step of the version of the MBI method we use. The WSN model is represented in the form called the {\em multi-compressor  KLT-MBI transform} and is based on the ideas which are different from those used in known methods \cite{ dragotti2009, 5447742, 5504834, 6334203,4276987,Song20052131, 1420805, 4016296,4475387,Schizas2007}.
}
 The multi-compressor  KLT-MBI is given in terms of pseudo-inverse matrices and, therefore, it is
numerically stable and always exists. In other words, the proposed WSN  model provides compression, de-noising and reconstruction of distributed signals for the cases  when known  methods either are not applicable (because of singularity of associated matrices)  or produce  larger associated errors. This observation is supported, in particular, by results of simulations in Section \ref{simulations}.


\subsection{Notation}\label{notation}

 Here, we provide some notation which is required to formalize the problem
in the form presented in Section \ref{stat1} below. Let us write $(\Omega,\Sigma,\mu)$ for a probability space\footnote{$\Omega=\{\omega\}$ is the set of outcomes, $\Sigma$ a $\sigma-$field of measurable subsets of $\Omega$ and $\mu:\Sigma\rightarrow[0,1]$ an associated probability measure on $\Sigma$ with $\mu(\Omega)=1$.}. We denote by ${\bf x} \in  L^2(\Omega,\mathbb{R}^m)$ the signal of interest\footnote{The space $L^2(\Omega,\mathbb{R}^m)$ has to be used because of the norm introduced below in (\ref{eq6}).} (a source signal to be estimated) represented as ${\bf x}=[{\bf x}^{(1)},\ldots,{\bf x}^{(m)}]^T$ where ${\bf x}^{(j)}\in L^2(\Omega,\mathbb{R})$, for $j=1,\ldots,m$. Futher, ${\bf y}_1 \in  L^2(\Omega,\mathbb{R}^{n_1}),\ldots,{\bf y}_p \in  L^2(\Omega,\mathbb{R}^{n_p})$ are observations made by the sensors where $n_1+\ldots + n_p = n$. In this regard, we write
\begin{equation}\label{eq1}
{\bf y}=[{\bf y}^{T}_1,\ldots,{\bf y}^{T}_p]^T\;\text{ and }\;{\bf y}=[{\bf y}^{(1)},\ldots,{\bf y}^{(n)}]^T
\end{equation}
where ${\bf y}^{(k)} \in L^2(\Omega,\mathbb{R})$, for $k = 1,\dots, n$. We would like to emphasize {\em a difference between
${\bf y}_j$ and $\y^{(k)}$:} in (\ref{eq1}), the observation ${\bf y}_j$, for $j = 1,\dots,p$, is a `piece' of random vector ${\bf y}$
(i.e. ${\bf y}_j$ is a random vector itself), and ${\bf y}^{(k)}$, for $k = 1,\ldots,n$ is an entry of y (i.e. ${\bf y}^{(k)}$ is
a random variable).\footnote{Therefore,   $\y_j = [\y^{(n_0 + n_1+\ldots +n_{j-1})},\ldots, \y^{(n_1+\ldots +n_{j})}]^T$ where $j=1,\ldots,p$ and $n_0=1$.}

For $j = 1,\ldots, p$, let us define a sensor model $\mathcal{Q}_j : L^2(\Omega,\mathbb{R}^{n_j}) \rightarrow L^2(\Omega,\mathbb{R}^{r_j})$ by the relation
\begin{equation}\label{eq3}
[\mathcal{Q}_j({\bf y}_j)](\omega)=Q_j[{\bf y}_j(\omega)]
\end{equation}
where $Q_j\in\mathbb{R}^{r_j\times n_j}$,
\begin{equation}\label{eq4}
r=r_1+\ldots+r_p,\;\text{ where }\;r_j\leq n_j.
\end{equation}
 Let us denote ${\bf u}_j=\mathcal{Q}_j({\bf y}_j)$ and ${\bf u}=[{\bf u}^{T}_1,\ldots,{\bf u}^{T}_p]^T$, where for $j=1,\ldots,p$, vector ${\bf u}_j\in L^2(\Omega,\mathbb{R}^{r_j})$ represents the compressed and filtered information vector transmitted by a $j$th sensor $\mathcal{Q}_j$ to the fusion center $\p$.
A fusion center model is defined by $\p : L^2(\Omega,\mathbb{R}^{r}) \rightarrow L^2(\Omega,\mathbb{R}^{m})$ so that
\begin{equation}\label{eq5}
[\p({\bf u})](\omega)=P[{\bf u}(\omega)],
\end{equation}
where $P\in\mathbb{R}^{m\times r}$ and ${\bf u}\in L^2(\Omega,\mathbb{R}^r)$. To state the problem in the next section, we also
denote
\begin{equation}\label{eq6}
  \e\left[\|{\bf x}(\omega)\|_2^2\right] := \|{\bf x}\|^2_\Omega =\int_\Omega \|{\bf x}(\omega)\|_2^2 d\mu(\omega) < \infty,
\end{equation}
where $\|{\bf x}(\omega)\|_2$ is the Euclidean norm of ${\bf x}(\omega)\in\mathbb{R}^m$. For convenience, we will use notation $\|{\bf x}\|^2_\Omega$, not $\e\left[\|{\bf x}(\omega)\|_2^2\right] $, to denote the norm in (\ref{eq6}).

\section{ Formalization and Statements of the  Problems } \label{preliminaries}

\subsection{Preliminaries and Formalization of the Problem} \label{discussion_problem}

For the WSN depicted in Fig. \ref{fig1}, the problem can be stated as follows:
Find models of the sensors, $\mathcal{Q}_1,\ldots,\mathcal{Q}_p$, and a model of the fusion center, $\p$, that provide
\begin{equation}\label{eq7}
\min_{\mathcal{P},\mathcal{Q}_1,\ldots,\mathcal{Q}_p} \left\| \x - \p\left[ \begin{array}{c}
                                                                                   \mathcal{Q}_1({\bf y}_1) \\
                                                                                   \vdots \\
                                                                                   \mathcal{Q}_p({\bf y}_p)
                                                                                 \end{array}\right] \right\|_\Omega^2
\end{equation}
under the assumption that $\mathcal{Q}_1,\ldots,\mathcal{Q}_p$ and $\p$ are given by (\ref{eq3}) and (\ref{eq5}), respectively.
The model of the fusion center, $\p$, can be represented as $\p = [\p_1,\ldots,\p_p]$ where  $\p_j : L^2(\Omega,\mathbb{R}^{r_j}) \rightarrow L^2(\Omega,\mathbb{R}^{m})$, for $j = 1,\ldots,p$. Let us write
\begin{eqnarray}
&&\hspace*{-20mm}\min_{\substack{\mathcal{P}_1,\ldots, \mathcal{P}_p, \\ \mathcal{Q}_1,\ldots,\mathcal{Q}_p}}\left\| \x - [\p_1,\ldots,\p_p]\left[ \begin{array}{c}
                               \mathcal{Q}_1({\bf y}_1) \\
                               \vdots \\
                               \mathcal{Q}_p({\bf y}_p)
                             \end{array}\right] \right\|_\Omega^2\nonumber \\
&& \hspace*{10mm}= \min_{\substack{\mathcal{P}_1,\ldots, \mathcal{P}_p, \\ \mathcal{Q}_1,\ldots,\mathcal{Q}_p}}\|{\bf x}-[\p_1\mathcal{Q}_1({\bf y}_1)+\ldots+\p_p\mathcal{Q}_p({\bf y}_p)]\|_\Omega^2\label{eq9}
\end{eqnarray}
where $\y = [ {\bf y}_1^T,\ldots,  {\bf y}_p^T ]^T.$


\subsection{Statement of the Problem} \label{stat1}

For $j=1,\ldots,p$, let us denote $\f_j = \p_j \q_j$. We also write $\f = [\f_1,\ldots,\f_p]$. Then the WSN model can be represented as
\begin{equation}\label{ftq}
\f (\y)= \p\q (\y) \qa\displaystyle \f(\y) = \sum_{j=1}^p \f_j(\y_j).
\end{equation}
Denote by $\rrr(m, n, k)$ the variety of all $m \times n$ linear operators of rank  at most $k$. For the sake of simplicity we sometimes will also write $\rrr_k$ instead of $\rrr(m, n, k)$.
The problem in (\ref{eq9}) can equivalently be reformulated as follows: Find $\f_1\in \mathcal{R}(m,n_1,r_1),$ $\ldots,$ $\f_p\in \mathcal{R}(m,n_p,r_p)$ that solve
\begin{equation}\label{f1p}
\min_{\mathcal{F}_1\in \mathcal{R}_{r_1},\ldots,\mathcal{F}_p\in\mathcal{R}_{r_p}} \left\|{\bf x}-\sum_{j=1}^p \f_j(\y_j)\right\|_\Omega^2.
\end{equation}
Recall that $r_j < n_j$, for $j=1,\ldots,p$.

Further, to simplify the notation we will use the same symbol to denote an operator
and the associated matrix. For example, we write $Q_j$ to denote both the operator $\mathcal{Q}_j$ and
matrix $Q_j$ introduced in (\ref{eq3}). Similarly, we write $P_j$ to denote both operator $\p_j $
 and matrix $P_j\in\mathbb{R}^{m\times r_j}$ introduced above, etc. In particular, by this reason, in
 (\ref{f1p}) we will write $F_j$ , not $\mathcal{F}_j$ .



\subsection{Assumptions}\label{assumptions}

For ${\bf x}$ represented by ${\bf x} = [{\bf x}^{(1)},\ldots, {\bf x}^{(m)}]^T$
 where $\x^{(j)}\in L^2(\Omega,\mathbb{R})$  we write
\begin{equation}\label{eq14}
E[{\bf xy}^T]=E_{xy}=\left\{\langle{\bf x}^{(j)},{\bf y}^{(k)}\rangle\right\}_{j,k=1}^{m,n}\in\mathbb{R}^{m\times n},
\end{equation}
where $\displaystyle\langle{\bf x}^{(j)}{\bf y}^{(k)}\rangle=\int_\Omega {\bf x}^{(j)}(\omega){\bf y}^{(k)}(\omega) d\mu(\omega)$ and ${\bf y} = [{\bf y}^{(1)},\ldots, {\bf y}^{(n)}]^T$.

The assumption used in the known methods \cite{dragotti2009, 5447742, 5504834,6334203,ma3098,4276987,Song20052131, 1420805, 4016296,4475387,Schizas2007,  Saghri2010} is that covariance matrices
$E_{xy}$ and $E_{yy}$  are known. At the same time, in many cases, it is
difficult to know exact values of $E_{xy}$ and $E_{yy}$.  For instance, if it is assumed that signal $\y$ is a Gaussian random vector then the associated  parameters $\rho$ and $\sigma^2$ for matrix $E_{yy}$ are still unknown (see \cite{4475387} as an example).  Therefore, we are mainly concerned with the case
when estimates of $E_{xy}$ and $E_{yy}$ can be obtained. Methods of estimation of matrices $E_{xy}$ and $E_{yy}$
were studied in a number of papers (see, for example, \cite{149980,Ledoit2004365,ledoit2012,Adamczak2009,Vershynin2012,won2013,schmeiser1991,yang1994}) and it is not a
subject of our work. In particular, samples of training signals taken for some different
random outcomes $\omega$ might be available. Some knowledge of the covariances can also
come from specific data models. Then the aforementioned covariance matrices can be
estimated.

In the following Section \ref{linear wsn}, we  provide solutions in terms of both matrices $E_{xy}$, $E_{yy}$ and their
estimates. The associated error analysis is given in  Sections \ref{error_analysis} and \ref{appendix}.

\subsection{Solution of Problem  (\ref{f1p}) for Single Sensor, i.e. for $p=1$}
\label{particular_solution}

Here, we recall the known result for the case of a single sensor, i.e. for $p = 1$ in (\ref{eq7}) and
(\ref{f1p}),  and provide some related explanations which will be extended in the sections that follow.

The Moore-Penrose generalized inverse for a matrix $M$ is denoted by $M^\dagger$. We set
$M^{1/2\dagger}=(M^{1/2})^\dagger$, where $M^{1/2}$ is a square root of $M$, i.e. $M = M^{1/2}M^{1/2}$.
Then the known solution (see, for example \cite{torbook2007}) of the particular case of problem (\ref{f1p}), for $p = 1$,  is given by
\begin{equation}\label{eq15}
F_1=\left[E_{x y_1}(E^\dagger_{y_1y_1})^{1/2}\right]_{r_1}(E^\dagger_{y_1y_1})^{1/2}+M_1[I-E_{y_1y_1}^{1/2}(E^\dagger_{y_1y_1})^{1/2}]
\end{equation}
where  symbol $[\cdot]_{r_1}$ denotes a truncated singular value decomposition (SVD) taken with $r_1$ first nonzero singular values
and matrix $M_1$ is  arbitrary. The expression (\ref{eq15}) represents the Karhunen-Lo\`{e}ve transform  as  given, for example, in \cite{dragotti2009}.

In (\ref{eq15}), in particular,  $M_1 = \mathbb{O}$  where $\mathbb{O}$ is the zero matrix. Then $Q_1$ and $P_1$ that solve (\ref{eq7})
follow from a decomposition of matrix $\left[E_{x,y_1}(E^\dagger_{y_1y_1})^{1/2}\right]_{r_1}(E^\dagger_{y_1y_1})^{1/2}$, based on the truncated
SVD, in a product $P_1Q_1$ of $r_1 \times n$ matrix $Q_1$ and $m \times r_1$ matrix $P_1$ , respectively.

We will extend this argument in Section \ref{linear wsn} below for a general case with more than one sensor.

\section{Main Results}\label{linear wsn}

\subsection{Greedy Approach  to Solution of Problem (\ref{f1p}), for $p=1,2,\ldots$}\label{greedy}

We wish to find $F_1,\ldots,F_p$ that provide a solution to the problem  (\ref{f1p}) for an arbitrary finite number of sensors in the WSN model, i.e. for $p=1,2,\ldots$ in (\ref{f1p}). To this end, we need  some more preliminaries which are given in Sections \ref{svdsvd} and \ref{reduction} that follow. The method itself and an associated algorithm are then represented in  Section \ref{structure}.

\subsubsection{SVD, Orthogonal Projections and Matrix Approximation}\label{svdsvd}   Let the SVD of a matrix $C \in \mathbb{R}^{m\times s}$ be given by
\begin{equation}\label{eq19}
C=U_C\Sigma_C V^T_C,
\end{equation}
where $U_C\in\mathbb{R}^{m\times m}$ and $V_C\in\mathbb{R}^{s\times s}$ are unitary matrices, $\Sigma_C=\text{diag}(\sigma_1(C),\ldots,\sigma_{\min(m,s)}(C))\in\mathbb{R}^{m\times s}$ is a generalized diagonal matrix, with the singular values $\sigma_1(C)\geq\sigma_2(C)\geq\ldots0$ on the main diagonal.
Let $U_C=[u_1\;u_2\;\ldots\;u_m]$ and $V_C=[v_1\;v_2\;\ldots\;v_s]$ be the representations of $U$ and $V$ in terms of their $m$ and $s$ columns, respectively. Let
\begin{equation}\label{lcrc}
L_C=\sum_{i=1}^{\text{rank }C}u_iu_i^T\in\mathbb{R}^{m\times m}\;\;\text{ and }\;\;R_C=\sum_{i=1}^{\text{rank }C}v_iv_i^T\in\mathbb{R}^{s\times s}
\end{equation}
 be the orthogonal projections on the range of $C$ and $C^T$, correspondingly. Define
\begin{equation}\label{eq21}
C_r=[C]_r=\sum_{i=1}^r\sigma_i(C)u_iv_i^T=U_{C_r}\Sigma_{C_r}V_{C_r}^T\in\mathbb{R}^{m\times s}
\end{equation}
for $r=1,\ldots,\text{rank }C$, where
\begin{equation}\label{eq22}
U_{C_r}=[u_1\;u_2\;\ldots\;u_r],\; \Sigma_{C_r}=\text{diag}(\sigma_1(C),\ldots,\sigma_r(C)) \text{ and } V_{C_r}=[v_1\;v_2\;\ldots\;v_r].
\end{equation}

For $r>\text{rank }C$, we write $C^{(r)}=C(=C_{\text{rank }C})$. For $1\leq r<\text{rank }C$, the matrix $C^{(r)}$ is uniquely defined if and only if $\sigma_r(C)>\sigma_{r+1}(C)$.

Consider the problem: Given matrices $S_j$ and $G_j$, for $j=1,\ldots,p$, find matrix $F_j$ that solves
\begin{eqnarray}\label{minfj}
 \min_{ F_j\in {\mathbb R}_{r_j}} \left\|S_j - F_jG_j \right\|^2, \quad \mbox{for $j=1,\ldots,p$}.
\end{eqnarray}
The solution is given by Theorem \ref{th11} (which is a particular case of the results in \cite{Torokhti2007,tor5277}) as follows.

\begin{theorem}\label{th11}
Let  $K_j= M_j\left(I-L_{G_{j}}\right)$ where  $M_j$ is an arbitrary matrix. The matrix $F_j$ given by
\begin{equation}\label{fjhjrgj}
F_j=\left[S_jR_{G_j}\right]_{r_j}G_j^\dagger (I+K_j), \quad \mbox{for $j=1,\ldots,p$,}
\end{equation}
is a minimizing matrix for the minimal problem (\ref{minfj}). Here,  $R_{G_j}$ and  $[\cdot]_{r_j}$  are defined similarly to  (\ref{lcrc}) and  (\ref{eq21}), respectively. Any minimizing $F_j$ has the
above form if and only if either
\begin{equation}\label{rrjaj}
r_j\geq \text{rank }(S_jR_{G_j})
\end{equation}
or
\begin{equation}\label{rjajrcj}
1\leq r_j<\text{rank }(S_jR_{G_j})\;\;\text{ and }\sigma_{r_j}(S_jR_{G_j})>\sigma_{{r_j}+1}(S_jR_{G_j}),
\end{equation}
 where $\sigma_{r_j}(S_jR_{G_j})$ is a singular value in the SVD for matrix $S_jR_{G_j}$.
\end{theorem}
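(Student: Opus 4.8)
The plan is to reduce the problem (\ref{minfj}) to a standard truncated-SVD approximation by peeling off an $F$-independent error term, apply the Eckart--Young--Mirsky theorem to what remains, and then read off the full solution set by solving a linear matrix equation under the rank bound. Fix $j$ and, for brevity, suppress the index $j$, writing $S,G,F,r,M,K$ for $S_j,G_j,F_j,r_j,M_j,K_j$; I would first record the projector identities $R_G=G^\dagger G$ and $L_G=GG^\dagger$, which give $GR_G=G$ and $(I-L_G)G=0$.

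First I would split the cost orthogonally. Since $GR_G=G$ we have $FG=FGR_G$, so the rows of both $FG$ and $SR_G$ lie in $\operatorname{range}(G^T)$, while the rows of $S(I-R_G)$ lie in its orthogonal complement. Writing $S-FG=S(I-R_G)+(SR_G-FG)$ and using $R_G(I-R_G)=0$ inside the Frobenius inner product $\langle A,B\rangle=\operatorname{tr}(A^TB)$, the cross term vanishes and
\[
\|S-FG\|^2=\|S(I-R_G)\|^2+\|SR_G-FG\|^2 .
\]
The first summand does not depend on $F$, so the problem reduces to minimizing $\|SR_G-FG\|^2$ over $\operatorname{rank}(F)\le r$.

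Next I would bound the second summand from below. Because $\operatorname{rank}(FG)\le\operatorname{rank}(F)\le r$, the Eckart--Young--Mirsky theorem gives $\|SR_G-FG\|\ge\|SR_G-[SR_G]_r\|$. To see that $F=[SR_G]_rG^\dagger(I+K)$ attains this bound, I would check that $KG=M(I-L_G)G=0$, so that
\[
FG=[SR_G]_rG^\dagger G=[SR_G]_rR_G=[SR_G]_r,
\]
the last equality holding because the row space of $[SR_G]_r$ is contained in that of $SR_G$, hence in $\operatorname{range}(G^T)$. Since $\operatorname{rank}(F)\le\operatorname{rank}([SR_G]_r)\le r$, this $F$ is feasible and optimal, which proves the first assertion.

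Finally, for the ``if and only if'' characterization I would argue as follows. Any minimizer forces $FG$ to be a best rank-$\le r$ approximation of $SR_G$; conditions (\ref{rrjaj}) and (\ref{rjajrcj}) are exactly those under which such an approximation is unique (no truncation when $r\ge\operatorname{rank}(SR_G)$, and the strict gap $\sigma_{r}(SR_G)>\sigma_{r+1}(SR_G)$ otherwise), so $FG=[SR_G]_r$ is forced. It then remains to solve the linear matrix equation $FG=[SR_G]_r$ under $\operatorname{rank}(F)\le r$: its general solution is $F=[SR_G]_rG^\dagger+N(I-L_G)$ with $N$ arbitrary, and I would show, using $\operatorname{range}([SR_G]_rG^\dagger)=\operatorname{range}([SR_G]_r)$ together with the rank bound, that the homogeneous term must lie in $\operatorname{range}([SR_G]_r)$ and hence can be absorbed into the factor $[SR_G]_rG^\dagger$, yielding the stated form with $K=M(I-L_G)$. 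The hard part is this last ``only if'' step: controlling the interaction of the rank constraint with the null-space freedom $N(I-L_G)$, and showing that whenever the singular-value gap fails one can exhibit a minimizer outside the stated family, which is precisely what makes (\ref{rrjaj})--(\ref{rjajrcj}) both necessary and sufficient.
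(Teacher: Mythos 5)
Your argument splits into two halves, and they fare very differently. The sufficiency half --- the orthogonal decomposition $\|S-FG\|^2=\|S(I-R_G)\|^2+\|SR_G-FG\|^2$, the Eckart--Young--Mirsky lower bound, and the check that $F=[SR_G]_rG^\dagger(I+K)$ attains it because $KG=\mathbb{O}$ and $[SR_G]_rR_G=[SR_G]_r$ --- is correct and complete. It is also, necessarily, a different route from the paper: the paper offers no argument at all for Theorem \ref{th11}, only the one-line citation ``the proof follows from \cite{Torokhti2007,tor5277}'', so your self-contained derivation of the first assertion is a genuine addition rather than a reproduction.

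The characterization half, however, has a real gap, and it sits exactly where you flagged ``the hard part''. Your plan is: uniqueness of the best rank-$\le r_j$ approximation forces $F_jG_j=[S_jR_{G_j}]_{r_j}$; the general solution of this equation is $F_j=[S_jR_{G_j}]_{r_j}G_j^\dagger+N(I-L_{G_j})$; and the rank bound forces $\mathrm{col}\,(N(I-L_{G_j}))\subseteq\mathrm{col}\,([S_jR_{G_j}]_{r_j})$, after which absorption works (indeed $M_j=G_j[S_jR_{G_j}]_{r_j}^\dagger N(I-L_{G_j})$ does the job). The forcing step is valid only when $r_j\le\operatorname{rank}(S_jR_{G_j})$: writing $A=[S_jR_{G_j}]_{r_j}G_j^\dagger$ and $B=N(I-L_{G_j})$, one has $AL_{G_j}=A$ and $BL_{G_j}=\mathbb{O}$, hence $\mathrm{col}\,(A+B)=\mathrm{col}\,A+\mathrm{col}\,B$ and $\operatorname{rank}(A+B)=\operatorname{rank}A+\operatorname{rank}B-\dim(\mathrm{col}\,A\cap\mathrm{col}\,B)$; when $\operatorname{rank}A=r_j$ the constraint $\operatorname{rank}(A+B)\le r_j$ indeed forces $\mathrm{col}\,B\subseteq\mathrm{col}\,A$. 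But condition (\ref{rrjaj}) also allows $r_j>\operatorname{rank}(S_jR_{G_j})$, and there $\operatorname{rank}A=\operatorname{rank}(S_jR_{G_j})<r_j$, so the rank budget leaves room for a component of $B$ outside $\mathrm{col}\,(S_jR_{G_j})$; such an $F_j$ is still a minimizer but is not of the form (\ref{fjhjrgj}). Concretely, take $m=n_j=n=2$, $S_j=G_j=\diag(1,0)$, $r_j=2$: the minimizers are exactly the matrices $\begin{pmatrix}1&a\\0&b\end{pmatrix}$ with $a,b$ arbitrary, while (\ref{fjhjrgj}) produces only those with $b=0$. So in that regime your absorption step cannot be repaired: the set equality it is meant to establish is itself false for the multiplicative form $[S_jR_{G_j}]_{r_j}G_j^\dagger(I+K_j)$, though it would survive for the additive form $[S_jR_{G_j}]_{r_j}G_j^\dagger+M_j(I-L_{G_j})$ appearing in (\ref{eq15}). (Your converse direction is fine: when the gap $\sigma_{r_j}>\sigma_{r_j+1}$ fails, any other best approximation $P\neq[S_jR_{G_j}]_{r_j}$ gives the minimizer $PG_j^\dagger$, which lies outside the family since $PG_j^\dagger G_j=P$.) The honest summary is that your proof establishes the first assertion in full and the characterization for $r_j\le\operatorname{rank}(S_jR_{G_j})$, and in doing so it exposes that the ``only if'' clause as transcribed in the paper is problematic when $r_j>\operatorname{rank}(S_jR_{G_j})$; resolving that case requires consulting the exact statements in \cite{Torokhti2007,tor5277}, on which the paper's proof silently relies.
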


\begin{proof}
The proof follows from  \cite{Torokhti2007,tor5277}.
  $\hfill\blacksquare$
\end{proof}

We note that the arbitrary matrix $M_j$ implies non-uniqueness of  solution (\ref{fjhjrgj}) of
problem (\ref{minfj}).

\subsubsection{Reduction of Problem (\ref{f1p}) to Equivalent Form}\label{reduction}

We denote $F=[F_1,\ldots,F_p]$, where $F\in\mathbb{R}^{m\times n}$ and $F_j\in\mathbb{R}^{m\times n_j}$, for all $j=1,\ldots,p$, and write $\|\cdot\|$ for the Frobenius norm. Then
\begin{eqnarray}\label{xfy1p}
\left\|{\bf x}-\sum_{j=1}^p F_j(\y_j)\right\|_\Omega^2 &=& \|{\bf x}-F({\bf y})\|_{\Omega}^2
  =  \text{tr}\{E_{xx}-E_{xy}F^T-FE_{yx}+FE_{yy}F^T\}\nonumber\\
                            & = & \|E_{xx}^{1/2}\|^2-\|E_{xy}(E_{yy}^{1/2})^\dagger\|^2+\|E_{xy}(E_{yy}^{1/2})^\dagger - FE_{yy}^{1/2}\|^2.
\end{eqnarray}
Denote by $\rt(m, n, k)$ the variety of all $m \times n$ matrices of rank  at most $k$. For the sake of simplicity we also write $\rt_k = \rt(m, n, k)$.

In (\ref{xfy1p}), only the last term depend on $F_1,\ldots,F_p$.   Therefore, (\ref{f1p}) and (\ref{xfy1p})  imply
\begin{equation}\label{fcaj}
  \min_{\mathcal{F}_1\in \mathcal{R}_{r_1},\ldots,\mathcal{F}_p\in\mathcal{R}_{r_p}} \left\|{\bf x}-\sum_{j=1}^p \f_j(\y_j)\right\|_\Omega^2
  =  \min_{F_1\in \mathbb{R}_{r_1},\ldots,F_p\in  \mathbb{R}_{r_p}}  \|E_{xy}(E_{yy}^{1/2})^\dagger - FE_{yy}^{1/2}\|^2.
\end{equation}
Let us now denote $H=E_{xy}(E_{yy}^{1/2})^\dagger$ and represent matrix $E_{yy}^{1/2}$ in blocks,
$ 
E_{yy}^{1/2}= [G_{1}^T,\ldots,G_{p}^T ]^T
$ 
where $G_j\in \rt^{n_j\times n}$, for $j=1,\ldots,p$. Then in (\ref{fcaj}),
\begin{eqnarray}\label{pezzghj}
E_{xy}(E_{yy}^{1/2})^\dagger - FE_{yy}^{1/2} = H -\sum_{j=1}^p F_jG_j .
\end{eqnarray}
Therefore, (\ref{xfy1p}) and (\ref{pezzghj}) imply
\begin{eqnarray}\label{f1pf1p}
\min_{\mathcal{F}_1\in \mathcal{R}_{r_1},\ldots,\mathcal{F}_p\in\mathcal{R}_{r_p}} \left\|{\bf x}-\sum_{j=1}^p \f_j(\y_j)\right\|_\Omega^2 = \min_{F_1\in {\mathbb R}_{r_1},\ldots, F_p\in {\mathbb R}_{r_1}}\left\|H - \sum_{j=1}^p F_jG_j\right\|^2.
\end{eqnarray}
On the basis of (\ref{f1pf1p}), problem (\ref{f1p}) and the problem
\begin{eqnarray}\label{f1ph1p}
 \min_{F_1\in {\mathbb R}_{r_1},\ldots, F_p\in {\mathbb R}_{r_1}}\left\|H - \sum_{j=1}^p F_jG_j \right\|^2.
\end{eqnarray}
are equivalent. Therefore, below we consider problem  (\ref{f1ph1p}). In (\ref{f1ph1p}),  for $j=1,\ldots,p$, we use the representation
 \begin{eqnarray}\label{}
\left\|H - \sum_{j=1}^p F_jG_j \right\|^2 = \left\|S_j - F_jG_j\right\|^2,
\end{eqnarray}
where  $\displaystyle S_j = H-\sum_{\substack{i=1\\i\neq j}}^pF_iG_i$.

\subsubsection{Greedy Method for Solution of Problem (\ref{f1ph1p})}\label{structure}

A solution of problem (\ref{f1ph1p}) is based on the idea of the MBI method \cite{chen87,Zhening2015} which is a greedy approach to  solving optimization problems. The advantages of the MBI method  have been mentioned in Section \ref{differences}.
To begin with, let us denote
$$
\rt_{r_1,\ldots,r_p} = \rt_{r_1}\times...\times\rt_{r_p}, \quad{\bfF}=\left(F_1,...,F_p\right)\in\rt_{r_1,\ldots,r_p}\qa f({\bfF}) = \left\|H - \sum_{j=1}^p F_jG_j \right\|^2.
$$
The method we consider consists of the following steps.

{\em $1$st step.}  Given ${\bfF}^{(0)}=(F^{(0)}_1,...,F^{(0)}_p)\in\rt_{r_1,\ldots,r_p},$ compute, for $j=1,\ldots,p$,
 \begin{eqnarray}\label{whf1j}
\widehat{F}^{(1)}_j = \left[S^{(0)}_jR_{G_j}\right]_{r_j}G_j^\dagger (I+K_j),\quad \mbox{where $\displaystyle S^{(0)}_j= H-\sum_{\substack{i=1\\i\neq j}}^pF^{(0)}_iG_i$}.
\end{eqnarray}
Note that $\widehat{F}^{(1)}_j$ is the solution of problem (\ref{minfj}) represented by (\ref{fjhjrgj}). A choice of $F^{(0)}_1,\ldots, F^{(0)}_p,$ is considered in Section \ref{initial_iterations} below.

{\em $2$nd step.} Denote
  \begin{eqnarray}\label{bf1j}
  \bar{\bfF}^{(1)}_j = \left(F^{(0)}_1,\ldots,F^{(0)}_{j-1},\widehat{F}_j^{(1)},F^{(0)}_{j+1},\ldots,F^{(0)}_{p}\right),
\end{eqnarray}
 select $\bar{\bfF}^{(1)}_k$ such that
\begin{eqnarray}\label{bf1k}
  f(\bar{\bfF}^{(1)}_k) = \min_{\mbox{\scriptsize\boldmath $\bar{F}$}^{(1)}_1,\ldots, \mbox{\scriptsize\boldmath $\bar{F}$}^{(1)}_p} \left \{f(\bar{\bfF}^{(1)}_1), \ldots, f(\bar{\bfF}^{(1)}_k),\ldots, f(\bar{\bfF}^{(1)}_p) \right\}
\end{eqnarray}
and write
\begin{eqnarray}\label{bff1}
{\bfF}^{(1)} = \bar{\bfF}^{(1)}_k,
\end{eqnarray}
where we denote ${\bfF}^{(1)} = (F^{(1)}_1,...,F^{(1)}_p)\in\rt_{r_1,\ldots,r_p}$.

Then we repeat procedure (\ref{whf1j})-(\ref{bff1}) with the replacement of ${\bfF}^{(0)}$ by ${\bfF}^{(1)}$  as follows: Given ${\bfF}^{(1)} = (F^{(1)}_1,...,F^{(1)}_p),$ compute
$$
\widehat{F}^{(2)}_j = \left[S^{(1)}_jR_{G_j}\right]_{r_j}G_j^\dagger (I+K_j),
\quad \mbox{where $\displaystyle S^{(1)}_j= H-\sum_{\substack{i=1\\i\neq j}}^pF^{(1)}_iG_i$},
 $$
 select $\bar{\bfF}^{(2)}_k$ that satisfies (\ref{bf1k}) where superscript $(1)$ is replaced with  superscript $(2)$, and set  ${\bfF}^{(2)} = \bar{\bfF}^{(2)}_k.$

This process is continued up to the $q$th step when a given tolerance $\epsilon \geq 0$  is achieved in the sense
\begin{eqnarray}\label{qeps1}
|f({\bfF}^{(q+1)}) - f({\bfF}^{(q)}) |\leq \epsilon, \quad \mbox{for $q =1,2,\ldots$}.
\end{eqnarray}
It is summarized as follows.

\vspace{5mm}

\hrule

\vspace{0.1cm}

\noindent{\bf Algorithm 1:} Greedy solution of problem  (\ref{f1ph1p})

\vspace{0.1cm}

\hrule

\vspace{0.1cm}

\noindent{\bf Initialization:} ${\bfF}^{(0)}$,
$H$,  $S_1,...,S_p$ and $\epsilon>0$.

\vspace{0.25cm}

\hrule

\vspace{0.25cm}

\noindent\hspace{0.1cm} {1.} \hspace{0.1cm} {\bf for }$q=0,1,2,...$

\vspace{0.05cm}

\noindent\hspace{0.1cm} {2.} \hspace{0.5cm} {\bf for }$j=1,2,...,p$

\vspace{0.05cm}

\noindent\hspace{0.1cm} {3.} \hspace{0.9cm} $S^{(q)}_j= H-\displaystyle \sum_{\substack{i=1\\i\neq j}}^pF^{(q)}_iG_i$

\vspace{0.05cm}

\noindent\hspace{0.1cm} {4.} \hspace{0.9cm}  $\widehat{F}^{(q+1)}_j = \left[S^{(q)}_jR_{G_j}\right]_{r_j}G_j^\dagger$

\vspace{0.05cm}

\noindent\hspace{0.1cm} {5.} \hspace{0.9cm}  $\bar{\bfF}^{(q+1)}_j = \left(F^{(q)}_1,...,F^{(q)}_{j-1},\widehat{F}_j^{(q+1)},F^{(q)}_{j+1},...,F^{(q)}_{p}\right)$

\vspace{0.05cm}

\noindent\hspace{0.1cm} {6.} \hspace{0.5cm} {\bf end}

\vspace{0.05cm}

\noindent\hspace{0.1cm} {7.} \hspace{0.1cm} Choose ${\bfF}^{(q+1)} = \bar{\bfF}^{(q+1)}_k$ where  ${\bfF}^{(q+1)}= (F^{(q+1)}_1,...,F^{(q+1)}_p)$ and $\bar{\bfF}^{(q+1)}_k$ is such that\\
 \hspace*{25mm}$ \displaystyle f(\bar{\bfF}^{(q+1)}_k) = \min_{\mbox{\scriptsize\boldmath $\bar{F}$}^{(q+1)}_1,\ldots, \mbox{\scriptsize\boldmath $\bar{F}$}^{(q+1)}_p} \left\{f(\bar{\bfF}^{(q+1)}_1), \ldots, f(\bar{\bfF}^{(q+1)}_k),\ldots,  f(\bar{\bfF}^{(q+1)}_p)\right\}.$

\vspace{0.05cm}

\noindent\hspace{0.1cm} {7.} \hspace{0.1cm} {\bf If} $|f({\bfF}^{(q+1)}) - f({\bfF}^{(q)}) |\leq \epsilon$

\vspace{0.05cm}

\noindent\hspace{0.1cm} {8.} \hspace{0.5cm} {\bf Stop}

\vspace{0.05cm}

\noindent\hspace{0.1cm} {9.} \hspace{0.1cm} {\bf end}

\vspace{0.05cm}

\noindent\hspace{0.1cm} {10.} {\bf end}

\vspace{0.20cm}

\hrule

\vspace{1cm}



Algorithm 1 converges to a coordinate-wise minimum point of objective function $f({\bfF}) = \left\|H - \displaystyle\sum_{j=1}^p F_jG_j \right\|^2$ which is its local minimum.  Section \ref{convergence} provides more associated  details.
 \vspace{2mm}

\begin{remark}\label{rem1}
For $p=1$, the formulas for $F_j$ in (\ref{fjhjrgj}) and (\ref{whf1j}) coincide, and take the form
\begin{equation}\label{f1h1g1}
F_1 = \left[HR_{G_{1}}\right]_{r_1}G_{1}^\dagger(I+K_1)
\end{equation}
where  $G_{1} = E_{y_1y_1}^{1/2}$, $K_1=M_1\left(I-L_{G_{1}}\right)$, $L_{G_{1}} = E_{y_1y_1}^{1/2}(E^\dagger_{y_1y_1})^{1/2}$ and $HR_{G_{1}} = E_{x y_1}(E^\dagger_{y_1y_1})^{1/2}$. Thus, $F_1$ in (\ref{f1h1g1}) coincides with $F_1$ in (\ref{eq15}) (since $M_1$ is arbitrary). In other words, the KLT represented by  (\ref{eq15}) is a particular case of the expressions in (\ref{fjhjrgj}) and (\ref{whf1j}). For this reason, (\ref{fjhjrgj}) and (\ref{whf1j})  can be regarded as  extensions of the KLT to the case under consideration. Therefore, the transform represented by  $\displaystyle \sum_{j=1}^p F_j(\y_j)$ where $F_1,\ldots,F_p$ solve (\ref{minfj}) can be regarded { the multi-compressor  Karhunen-Lo\`{e}ve-like transform} (or the multi-compressor  KLT). Further, Algorithm 1 represents the version of the MBI method that uses the multi-compressor  KLT. Therefore, the WSN model in the form  $\displaystyle \sum_{j=1}^p F_j^{(q+1)}(\y_j)$ where $F^{(q+1)}_1,\ldots,F^{(q+1)}_p$ are determined by Algorithm 1 can be interpreted as a transform as well. We call this transform   the multi-compressor  KLT-MBI.
\end{remark}

\begin{remark}\label{rem2}
In practice, an exact representation of matrices $E_{xy}$ and $E_{yy}$ might be unknown. Their
estimates, $\widetilde{E}_{xy}$ and $\widetilde{E}_{yy}$,  can be obtained by known methods \cite{149980,Ledoit2004365,ledoit2012,Adamczak2009,Vershynin2012,won2013,schmeiser1991,yang1994}.
In this regard, we denote $\widetilde{H}= \widetilde{E}_{xy}(\widetilde{E}_{yy}^{1/2})^\dagger$ and $\widetilde{E}_{yy}^{1/2}= [\widetilde{G}_{1}^T,\ldots,\widetilde{G}_{p}^T ]^T$
where $\widetilde{G}_j\in \rt^{n_j\times n}$, for $j=1,\ldots,p$, is a block of $\widetilde{E}_{yy}^{1/2}$. Then in (\ref{whf1j}),  $G_j$, $H$ and $G_i$ should be replaced with  $\widetilde{G}_j$, $\widetilde{H}$ and $\widetilde{G}_i$, respectively. In this case,  steps (\ref{whf1j})-(\ref{qeps1}) of the method and  Algorithm 1 are the same as before but ${\bfF}^{(q+1)}= (F^{(q+1)}_1,...,F^{(q+1)}_p)$ should be denoted by $\widetilde{\bfF}^{(q+1)}= (\widetilde{F}^{(q+1)}_1,...,\widetilde{F}^{(q+1)}_p).$
\end{remark}



\subsection{Models of Sensors and Fusion Center}\label{linear models}

The models of sensors $Q_1,\ldots,Q_p$ and the fusion center $P=[P_1,\ldots,P_p]$  of the WSN in Fig. \ref{fig1} follow from the formula $F_j = P_j Q_j$ introduced in (\ref{ftq}).
By the proposed method, $F_j$  is represented by $F^{(q+1)}_j$ given by  Algorithm 1 above. Therefore,  the mathematical model of the   WSN is given by
\begin{equation}\label{eq39}
\sum_{j=1}^p {F}^{(q+1)}_j({\bf y}_j)=\sum_{j=1}^p {P}^{(q+1)}_j {Q}^{(q+1)}_j({\bf y}_j)={P}^{(q+1)}\left(
                                                                          \begin{array}{c}
                                                                            {Q}_1^{(q+1)}({\bf y}_1) \\
                                                                            \vdots \\
                                                                            {Q}_p^{(q+1)}({\bf y}_p) \\
                                                                          \end{array}
                                                                        \right).
\end{equation}
Here, ${F}^{(q+1)}_j={P}_j^{(q+1)}{Q}_j^{(q+1)}$ and ${P}^{(q+1)}=[{P}_1^{(q+1)},\ldots,{P}_p^{(q+1)}]$.
Then ${Q}^{(q+1)}_j\in \rt^{r_j \times n_j}$ and ${P}^{(q+1)}_j\in \rt^{m \times r_j}$ follow from the  SVD decomposition of matrix ${F}^{(q+1)}_j$ taken with $r_j$ first nonzero singular values.  This procedure has been described in Section \ref{particular_solution} for the
case of only one sensor, i.e. for $p = 1$.

For the case when instead of matrices $E_{xy}$ and $E_{yy}$ their estimates $\widetilde{E}_{xy}$ and $\widetilde{E}_{yy}$ are used, the models are constructed by the similar procedure with the replacement of $F^{(q+1)}_j$ by $\widetilde{F}^{(q+1)}_j$ (see Remark \ref{rem2} above).

Note that the proposed WSN model  also provides de-noising of observations ${\bf y}_1,\ldots, {\bf y}_p$.

\section{Determination of Initial Iterations }
\label{initial_iterations}

To start Algorithm 1, the values of initial iterations $F_j^{(0)}$ and $\widetilde{F}_j^{(0)}$ should be defined. It is done as follows.
Let us denote $\x = [\x^T_1,\ldots,\x_p^T]^T$ where  $\x_i \in L^{2}(\Omega,{\mathbb R}^{m_i})$,  $i=1,\ldots,p$ and $m_1 + \ldots +m_p = m$. Suppose that matrix $P\in \rt^{m\times r}$ is given by
$
 P=\diag (P_{11},\ldots, P_{pp})
$
where $P_{jj}\in\rt^{m_j\times r_j}$, for $j=1,\ldots,p$.  Then
\begin{eqnarray}\label{b}
&&\hspace{-30mm}\left\|\left[ \begin{array}{c}
\x_1 \\
\vdots\\
\x_p\end{array} \right] - \diag (\p_{11},\ldots, \p_{pp}) \left[ \begin{array}{c}
\q_1 (\y_1) \\
\vdots\\
\q_p(\y_p)\end{array} \right] \right\|^2_{\Omega}\nonumber \\
&& \hspace{20mm} = \left\|\left[ \begin{array}{c}
\x_1 - \f_1(\y_1)\\
\vdots\\
\x_p - \f_p(\y_p)\end{array} \right]  \right\|^2_{\Omega}\label{bxp}\label{nxpfpyp}
= \sum_{j=1}^p \|\x_j - \f_j (\y_j)\|^2_{\Omega}\label{sumxjfj}
 \end{eqnarray}
and
\begin{eqnarray}\label{minf1pp}
  \min_{\mathcal{F}_1\in \mathcal{R}_{r_1},\ldots, \mathcal{F}_p\in  \mathcal{R}_{r_p}}\sum_{j=1}^p \|\x_j - F_j (\y_j)\|^2_{\Omega}
  = \sum_{j=1}^p \min_{F_j\in \mathcal{R}_{r_j}} \|\x_j - \f_j (\y_j)\|^2_{\Omega}.
 \end{eqnarray}
As a result, in this case, problem (\ref{f1p}) is reduced to the  problem of finding $\f_j$ that solves
 \begin{eqnarray}\label{f1pp1}
 \min_{\mathcal{F}_j\in \mathcal{R}_{r_j}} \left\|\x_j -  \mathcal{F}_j (\y_j)\right\|^2_{\Omega},
  \end{eqnarray}
for $j=1,\ldots,p$. Its solution is given by (\ref{eq15}) in Section \ref{particular_solution} above, i.e. by
\begin{equation}\label{fj02}
\widehat{F}_j=\left[E_{x_j y_j}(E^\dagger_{y_jy_j})^{1/2}\right]_{r_j}(E^\dagger_{y_jy_j})^{1/2} + M_j[I-E_{y_jy_j}^{1/2}(E^\dagger_{y_jy_j})^{1/2}]
\end{equation}
where $M_j$ is an arbitrary matrix. Then the initial iterations for Algorithm 1 are defined by
 \begin{eqnarray}\label{fj0xyj}
F_j^{(0)} = \widehat{F}_j,\quad \text{for $j=1,\ldots,p$}.
\end{eqnarray}
Similarly, when instead of matrices $E_{xy}$ and $E_{yy}$ their estimates $\widetilde{E}_{xy}=\{\widetilde{E}_{x_i y_j}\}_{i,j=1}^p$ and $\widetilde{E}_{yy}=\{\widetilde{E}_{y_i y_j}\}_{i,j=1}^p$ are used, the initial iterations are defined by
 \begin{eqnarray}\label{tfj0xyj}
\widetilde{F}^{(0)}_j = \left[\widetilde{E}_{x_j y_j}(\widetilde{E}^\dagger_{y_jy_j})^{1/2}\right]_{r_j}(\widetilde{E}^\dagger_{y_jy_j})^{1/2}  + M_j[I-\widetilde{E}_{y_jy_j}^{1/2}(\widetilde{E}^\dagger_{y_jy_j})^{1/2}],
\end{eqnarray}
where $\widetilde{E}_{x_j y_j}$ and $\widetilde{E}_{y_jy_j}$ are blocks of $\widetilde{E}_{xy}$ and $\widetilde{E}_{yy}$, respectively.

\section{Error Analysis: A Posteriori Associated Errors}
\label{error_analysis}

For  $F^{(q+1)}=[F_1^{(q+1)},\ldots,F_p^{(q+1)}]$ determined by Algorithm 1, the error associated
 with the proposed WSN  model   is represented as
\begin{eqnarray}\label{eq71}
 \|{\bf x}-[F_1^{(q+1)},\ldots,F_p^{(q+1)}]({\bf y})\|^2_\Omega = \|E_{xx}^{1/2}\|^2-\|E_{xy}(E_{yy}^{1/2})^\dagger\|^2
                                                           +\|E_{xy}(E_{yy}^{1/2})^\dagger-F^{(q+1)}E_{yy}^{1/2}\|^2,\nonumber
\end{eqnarray}
where $F^{(q+1)}=[F_1^{(q+1)},\ldots,F_p^{(q+1)}]$.

For $\widetilde{F}_j^{(q+1)}$ described in Remark \ref{rem2}, the associated error is given by the similar expression:
\begin{equation}\label{eq72}
\|{\bf x}-[\widetilde{F}_1^{(q+1)},\ldots,\widetilde{F}_p^{(q+1)}]({\bf y})\|^2_\Omega  =  \|\widetilde{E}_{xx}^{1/2}\|^2
-\|\widetilde{E}_{xy}(\widetilde{E}_{yy}^{1/2})^\dagger\|^2+\|\widetilde{E}_{xy}(\widetilde{E}_{yy}^{1/2})^\dagger
-\widetilde{F}^{(q+1)}\widetilde{E}_{yy}^{1/2}\|^2,
\end{equation}
where $\widetilde{F}^{(q+1)}=[\widetilde{F}_1^{(q+1)},\ldots,\widetilde{F}_p^{(q+1)}]$. The above formula (\ref{eq72}) is used in our simulations represented in the following section.


\section{Simulations}\label{simulations}

Here, we wish to illustrate the advantages of the proposed methodology with numerical examples carried out
{ under the assumption that either covariance matrices  $E_{xy}$, $E_{yy}$ or their estimates are known.  The assumption that only the covariance matrices are known is
similar to that used in \cite{dragotti2009, 5447742, 5504834,6334203,ma3098,4276987,Song20052131, 1420805, 4016296,4475387,Schizas2007,  Saghri2010}.  In particular, the estimates can be obtained  from samples of training signals.}
In many situations, the number of samples, $s$, is often
smaller  than the dimensions of the signals ${\bf x}$ and ${\bf y}$, which are $m$ and $n$, respectively \cite{won2013}.
At the same time, it is known that as $s\rightarrow \infty$, the ergodic theorem asserts that the  estimates converge to the true matrix values   \cite{schmeiser1991,yang1994}. In particular, for large $s$,  the  estimates of the covariance matrix  have been considered in \cite{Ledoit2004365,ledoit2012}. It is interesting to compare our simulation results for the cases when  $s$ is `relatively' small and   `relatively' large.  In the examples that follow, both case are considered.

{ A comparison with  known methods  \cite{ma3098,4276987,Song20052131, 1420805, 4016296,4475387} is a s follows. The method \cite{4475387} represents a generalization of methods \cite{Song20052131, 1420805, 4016296} and therefore, we provide a numerical comparison with method \cite{4475387} which includes, in fact, a comparison with  methods  \cite{Song20052131, 1420805, 4016296} as well. Further, covariance matrices used in the  simulations associated with  Figs. \ref{ex:fig1} (a), (c) and Figs. \ref{ex:fig2} (b), (c), (d)   are singular, and therefore, method \cite{4276987} is not applicable (in this regard, see also Section \ref{assumptions}). Therefore, in Figs. \ref{ex:fig1} (a), (c) and Figs. \ref{ex:fig2} (b), (c), (d), results related to algorithm in \cite{4276987} are not given.
 By the same reason, the method presented in \cite{ma3098} is not applicable as well. Moreover, the method in \cite{ma3098} is restricted to the case when the covariance matrix formed by the noise vector is block diagonal which is not the case here.
}

In the examples below,  different types of noisy observed signals  and different compression ratios are considered. In all  examples, our method provides the better associated accuracy than that for the methods in \cite{4276987,4475387} (and methods in \cite{Song20052131, 1420805, 4016296} as well, because they follow from  \cite{4475387}).

\begin{example}\label{ex0}
  We start with an example  similar to that considered in \cite{4475387}  assuming that a WSN has two sensors and the observations  $\y_1$ and $\y_2$ are represented by
\begin{equation}
{\bf y}_1=\x+\xi_1,\;\;\text{ and }\;\;{\bf y}_2=\x+\xi_2,
\end{equation}
where ${\bf x}\in L^2(\Omega,\mathbb{R}^3)$,  $\xi_1\in L^2(\Omega,\mathbb{R}^3)$ and $\xi_2\in L^2(\Omega,\mathbb{R}^3)$ are Gaussian {\em independent} random vectors with the zero mean. Let
$E_{xx}=\left[
  \begin{array}{ccc}
    0.585 & 0.270 & 0.390\\
    0.270 & 0.405 & 0.180\\
    0.390 & 0.180 & 0.260\\
  \end{array}
\right]$ and $E_{\xi_j,\xi_j} = \sigma_j^2 I_3$, for $j=1,2$, where $\sigma_1=0.2$ and $\sigma_2=0.4$, and $I_3$ is the $3\times 3$ identity matrix. Then $E_{xy}=[E_{xx}\;E_{xx}]$
and
$E_{yy}=\left[
         \begin{array}{cc}
           E_{xx}+\sigma_1^2 I_3 & E_{xx} \\
           E_{xx} & E_{xx} +\sigma_2^2 I_3\\
         \end{array}
       \right].$
For $r_1=r_2=1$, Algorithm 1 requires three iterations to achieve tolerance $\epsilon =0.146.$
The achievable  tolerance  of methods \cite{4276987,4475387}, for $r_1=r_2=1$, is $18\%$ worse, $\epsilon =0.173$, and it  is not improved after the initial iteration proposed in \cite{4475387}. 

\end{example}

\begin{example}\label{ex1}
{\rm  Let us consider the case  of a WSN with two sensors again where, as before,
\begin{equation}
{\bf y}_1=\x+\xi_1,\;\;\text{ and }\;\;{\bf y}_2=\x+\xi_2,
\end{equation}
where ${\bf x}\in L^2(\Omega,\mathbb{R}^m)$, $\xi_1\in L^2(\Omega,\mathbb{R}^m)$ and $\xi_2\in L^2(\Omega,\mathbb{R}^m)$, and $\y_1$ and $\y_2$ are noisy versions of the source. { Unlike Example \ref{ex0}  we now  assume that covariance matrices ${E}_{xy}$ and ${E}_{yy}$ are unknown. Therefore, their estimates $\widetilde{E}_{xy}$ and $\widetilde{E}_{yy} = \left\{\widetilde{E}_{y_iy_j}\right\}$, $i,j=1,2$, should be used. To this end,
}   estimates $\widetilde{E}_{xy}$ and $\widetilde{E}_{yy}$ have been determined from the samples of training signals as follows:
\begin{eqnarray}\label{ex1exy}
\widetilde{E}_{xy} = \frac{1}{s} [XY_1^T,  XY_2^T ]  \qa \widetilde{E}_{y_iy_j} = \frac{1}{s}Y_i Y_j^T\quad \mbox{for $i,j=1,2$}.
\end{eqnarray}
Here, $X\in\rt^{m\times s}$ has uniformly distributed random entries and, for $i=1,2$,
 \begin{eqnarray}\label{ex1:yi}
 Y_i = X + \sigma_i {\up}_i,
 \end{eqnarray}
  where $\sigma_i\in\rt$ and ${\up}_i\in\rt^{m\times s} $ has random entries, chosen from a normal distribution with mean
zero and variance one. Diagrams of typical errors associated with the proposed Algorithm 1 and known methods \cite{4276987,4475387} are given in Figs. \ref{ex:fig1} (a), (b).

Note that in Figs. \ref{ex:fig1} (a), (b), the obtained results are illustrated for different compression ratios $c_j = r_j/n_j$ where $j=1,\ldots,p$. The compression ratios $c_j=1/5$, for $j=1,2,3$,   used to obtain the results represented in  Fig. \ref{ex:fig1} (b) are smaller than those in Fig. \ref{ex:fig1} (a), $c_1=3/5$ and $c_2=7/10$. This is a reason for the error magnitudes represented in Fig. \ref{ex:fig1} (a) being  smaller than those in Fig. \ref{ex:fig1} (b). This observation also holds for other examples that follow.
Further, in Fig. \ref{ex:fig1} (b),  due to large sample size, $s=10000$, the estimate of matrix $E_{y_iy_j}$ is very close to its true value which is the identity. By this reason, iterations of our method are similar to each other and the associated errors are similar for almost all iterations. The same effect holds for methods \cite{4276987,4475387}.}
\end{example}

\begin{example}\label{ex2}
{\rm
Here, we consider the case when observations $\y_1$ and $\y_2$ are very noisy, i.e. reference signal $\x$ is significantly suppressed. To this end, we do not assume that $Y_i\in\rt^{m\times s}$ is  represented in the form (\ref{ex1:yi}) but it has random entries, chosen from a normal distribution with mean zero and variance one.
We also use  $\widetilde{E}_{xy}$ and $\widetilde{E}_{yy}= \left\{\widetilde{E}_{y_iy_j}\right\}$, $i,j=1,2$ in the form (\ref{ex1exy}) as before where $X\in\rt^{m\times s}$ is as in the above Example \ref{ex1}. For $m=n_1=N_2=2$ and $s=4$, examples of those matrices are
$
X = \left[
  \begin{array}{cccc}
   0.086  &  0.439  &  0.857  &  0.904\\
   0.074  &  0.574  &  0.386  &  0.429
  \end{array}
\right],
$
$
Y_1=\hspace*{-1mm}\left[
  \begin{array}{rrrr}
0.284  & -0.942  &   0.067  &  0.222\\
-2.206 &   0.514 &  -1.293 &  -0.686
  \end{array}
\right]$ and
$Y_2= \hspace*{-1mm}\left[
  \begin{array}{cccc}
0.4660  & -0.1260  &  0.3870  &  0.3290\\
0.6880  & -0.4690  & -0.9420  & -0.5630
  \end{array}
\hspace*{-1mm}\right].
$

For the case of two sensors (i.e. for $p=2$) and for the above samples $X$, $Y_1$ and $Y_2$,  the errors associated with the proposed method and known method \cite{4475387} are given in Fig. \ref{ex:fig1} (c).

For larger  magnitudes of $m,n_1,n_2,n_3,s$ and $r_j$, for $j=1,2,3$,  the errors associated with the proposed method and known methods are represented, for the case of two and three sensors (i.e. for $p=2$ and $p=3$, respectively), in Fig. \ref{ex:fig1} (d) and Fig. \ref{ex:fig2} (a).}

\end{example}

\begin{example}\label{ex3}
{\rm
In this example, we consider the case when  observations are corrupted by noise in the  way which is different from those in Examples \ref{ex1} and \ref{ex2}. Namely, we assume that, for $j=1,\ldots,p$,
\begin{equation}\label{}
\y_j = \aaa_j \x + \xi_j
\end{equation}
where $\aaa_j: L^2(\Omega,\mathbb{R}^{m}) \rightarrow L^2(\Omega,\mathbb{R}^{m})$ is a linear operator defined by  matrix $A_j\in \rt^{m\times m}$ with uniformly distributed random entries, and $\xi_j$ is a random noise. Samples of $\x$ and $\xi_j$ are simulated as matrices $X\in\rt^{m\times s}$ and $\sigma_j\up_j\in\rt^{m\times s}$, respectively, where $\sigma_j\in\rt$, such that $X$ has  uniformly distributed random entries and $\up_j$  has random entries, chosen from a normal distribution with mean zero and variance one.

The errors associated with  the proposed method and the known method, for the case of two and three sensors (i.e. for $p=2$ and $p=3$, respectively), and different choices of $m,n_j,s$ and $r_j$, for $j=1,2$ and $j=1,2,3$, are represented in Figs. \ref{ex:fig2} (b), (c) and (d).

Note that method \cite{4475387} is not numerically stable in these simulations. We believe this is because of the reason mentioned in Section \ref{differences}.}

\end{example}

\begin{example}\label{ex4}
{\rm
In the above examples, we used estimates of training signals, not training signals themselves.
Here, we wish to illustrate the obtained  theoretical results in a different way, by a comparison of a training reference signal  with its estimates obtained by our method and known methods.
To this end, we simulate the training reference signal $\x$  by its realizations, i.e. by a matrix $\bfX\in\rt^{m\times k} $ where each column  represents a realization of the signal. A sample $X\in\rt^{m\times s} $ with $s < k$ is formed from $\bfX$ by choosing the even columns. To represent the obtained results in a visible way, signal $\bfX\in\rt^{m\times k}$ is chosen as the known image Lena given by the $128\times 128$ matrix -- see Fig. \ref{ex4:fig1} (a), i.e. with $m, k =128$. Then $X\in\rt^{128\times 64} $.

Further, we consider the WSN with two sensors, i.e. with $p=2$, where the observed signal $\bfY_j$, for $j=1,2$, is simulated as follows:
\begin{eqnarray*}
\bfY_j = A_j\hspace*{-1mm}*\hspace*{-1mm}\bfX + \sigma_j \up_j
\end{eqnarray*}
where $A_j\in\rt^{128\times 128} $ has  uniformly distributed random entries, $\up_j\in\rt^{128\times 128} $  has random entries, chosen from a normal distribution with mean zero and variance one, $A_j\hspace*{-1mm}*\hspace*{-1mm} X $ represents the Hadamard matrix product, and $\sigma_1=0.2$ and $\sigma_2=0.1$.
Estimates $\widetilde{E}_{xy}$ and $\widetilde{E}_{y_iy_j}$ are used in the form (\ref{ex1exy}) where sample $Y_j\in \rt^{128\times 64}$ is formed from $\bfY_j$ by choosing the even columns.

 For $r_1= r_2=64$, the simulation results  are represented in Figs. \ref{ex4:fig1} and \ref{ex4:fig2}. Our method and known methods in \cite{4475387} and \cite{4276987}) have been applied to the above signals with $50$ iterations each. The associated errors  are evaluated in the form $|\bfX - \widehat{\bfX}|$ where $\widehat{\bfX}$ is the reconstruction of $\bfX$ by  the  method we use (i.e. by our method or methods in \cite{4276987} and \cite{4475387}).

 Similar to the other examples, Figs. \ref{ex4:fig1} and \ref{ex4:fig2} demonstrate  a more accurate   signal reconstruction associated  with the proposed method  than that associated with known methods.

}

\end{example}

\section{Conclusion}\label{conclusion}

We have addressed the problem of estimating an unknown random vector source when
the vector cannot be observed centrally. In this scenario, typical of wireless sensor networks (WSNs), distributed sensors are aimed to filter
and compress noisy observed vector, and then the compressed signals are transmitted to
the fusion center that decompress the signals in such a way that the original vector is
estimated within a prescribed accuracy. {  The key problem is to find models of the sensors and the fusion center in the best possible way}.

We proposed and justified the method for the determination of the models { based on a combination of  the solution \cite{tor5277} of the  rank constrained
least squares minimizing problem (represented by (\ref{f1p}) in Section \ref{stat1}) and  the  maximum block improvement  (MBI) method  \cite{chen87,Zhening2015}. The proposed
method is based on the following steps. First, we have shown how the original problem
can be reduced to the form (\ref{fcaj}) (Section \ref{reduction}) that allowed us to use the approaches developed in \cite{chen87,Zhening2015,tor5277}. As a result, under the assumption that  the associated covariance matrices or their estimates are known (from testing experiments, for example), the procedure for determining models of the sensors and the fusion center is given by Algorithm 1 (Section \ref{structure}).

 The obtained optimal WSN model represents an extension of the Karhunen-Lo\`{e}ve
transform (KLT) and has been  called the  multi-compressor KLT-MBI.
The known KLT follows from the multi-compressor  KLT-MBI as a particular case.}
  The models of the sensors and the fusion  center  have been determined in terms of the pseudo-inverse matrices.
Therefore, the proposed models are always well determined and numerically stable.  In other words, the proposed WSN  models provide compression, de-noising and reconstruction of distributed signals for the cases  when known  methods either are not applicable   or produce  larger associated errors. As a result, this approach mitigates to some extent the difficulties associated with the existing techniques. Since a `good' choice of the initial iteration gives  reduced
errors,  the special method for the determination of the initial iterations has
been considered.

The error analysis of the proposed method has been provided.

Finally, the advantages of the proposed method have been illustrated with numerical experiments carried out on the basis of simulations with estimates of the covariance matrices. It has been shown, in particular, that the errors associated with the proposed technique are smaller than those associated with the existing methods. This is because of the special features of our method described above.

\section{Appendix}\label{appendix}

\subsection{Convergence }\label{convergence}

Convergence of the method presented in Section \ref{structure} can be shown on the basis of the results presented in \cite{chen87,Zhening2015} as follows.

 We call ${\bfF}=\left(F_1,...,F_p\right)\in\rt_{r_1,\ldots,r_p}$ a point in the space $\rt_{r_1,\ldots,r_p}.$ For every point  $\bfF\in\rt_{r_1,\ldots,r_p}$, define a set
 $$
 \rt_{r_j}^{\mbox{\scriptsize\boldmath $F$}} = \left\{\left(F_1,\ldots,F_{j-1}\right)\right\} \times  \rt_{r_j} \times \left\{\left(F_{j+1},\ldots,F_{p}\right) \right\},\quad \mbox{for $j=1,\ldots,p$}.
 $$
A coordinate-wise minimum point of the procedure represented by Algorithm 1 is denoted by
  ${\bfF}^*=\left(F_1^*,...,F_p^*\right)$ where\footnote{The RHS in (\ref{fjarg}) is a set since the solution of problem $\displaystyle\min_{F_j\in {\mathbb R}_{r_j}}\;f(F_1^*,...,F_{j-1}^*,F_j,F_{j+1}^*,...,F_p^*) $  is not unique.}
\begin{eqnarray}\label{fjarg}
F_j^*\in \left\{\arg\min_{F_j\in {\mathbb R}_{r_j}}\;f(F_1^*,...,F_{j-1}^*,F_j,F_{j+1}^*,...,F_p^*) \right\}.
\end{eqnarray}
This point is a local minimum of objective function in (\ref{f1ph1p}), $f({\bfF}) = \left\|H - \displaystyle\sum_{j=1}^p F_jG_j \right\|^2$.\footnote{There could be other local minimums defined differently from that in (\ref{fjarg}).}
Note that ${\bfF}^{(q+1)}$ in Algorithm 1 and $\bfF^*$ defined by (\ref{fjarg}) are, of course, different.

For $\bfF^{(q)}$ defined by Algorithm 1, denote
\begin{eqnarray}\label{bfinfty}
\check{\bfF} =\lim_{q\rightarrow \infty} \bfF^{(q)}.
\end{eqnarray}
Note that because of (\ref{eq6}), the sequence $\{\bfF^{(q)} \}$  is bounded.

\begin{theorem}\label{th}
Point  $\check{\bfF} $ defined by (\ref{bfinfty}) is the coordinate-wise minimum of Algorithm 1.
\end{theorem}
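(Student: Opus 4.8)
The plan is to follow the standard convergence argument for the maximum block improvement scheme, specialized to the rank-constrained blocks $\rt_{r_j}$. First I would record the monotonicity of the objective along the iterates. For each block $j$, introduce the value function
\begin{equation}
g_j(\bfF) = \min_{F_j\in\rt_{r_j}} f(F_1,\ldots,F_{j-1},F_j,F_{j+1},\ldots,F_p),
\end{equation}
so that $f(\bar{\bfF}^{(q+1)}_j) = g_j(\bfF^{(q)})$ and, by the selection rule in Algorithm 1, $f(\bfF^{(q+1)}) = \min_{1\le j\le p} g_j(\bfF^{(q)})$. Since the current block $F_j^{(q)}$ is feasible in the $j$th block minimization, $g_j(\bfF^{(q)})\le f(\bfF^{(q)})$, whence $f(\bfF^{(q+1)})\le f(\bfF^{(q)})$. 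As $f\ge 0$, the sequence $\{f(\bfF^{(q)})\}$ is non-increasing and bounded below, hence convergent; call its limit $f^\ast$. By continuity of $f$ together with the definition (\ref{bfinfty}) of $\check{\bfF}$, we get $f(\check{\bfF}) = f^\ast$.

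Next I would show that \emph{every} per-block improvement vanishes in the limit, which is where the max-block rule (rather than a plain cyclic rule) does the work. The selection step gives
\begin{equation}
\max_{1\le j\le p}\bigl[f(\bfF^{(q)}) - g_j(\bfF^{(q)})\bigr] = f(\bfF^{(q)}) - \min_{1\le j\le p} g_j(\bfF^{(q)}) = f(\bfF^{(q)}) - f(\bfF^{(q+1)}).
\end{equation}
Because $\{f(\bfF^{(q)})\}$ converges, the right-hand side tends to $0$; consequently, for every fixed $j$ the non-negative quantity $f(\bfF^{(q)}) - g_j(\bfF^{(q)})$ also tends to $0$ as $q\to\infty$.

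The final and most delicate step is to pass to the limit inside $g_j$. Using the orthogonal decomposition of $\|S_j - F_jG_j\|^2$ along the row space of $G_j$ together with the Eckart--Young characterization underlying Theorem \ref{th11}, the minimal value admits the closed form
\begin{equation}
g_j(\bfF) = \|S_j(I-R_{G_j})\|^2 + \sum_{i>r_j}\sigma_i^2(S_j R_{G_j}), \quad S_j = H-\sum_{i\ne j}F_iG_i,
\end{equation}
with $R_{G_j}$ fixed. Since $S_j$ depends continuously (indeed affinely) on the remaining blocks and the singular values $\sigma_i(\cdot)$ are continuous in their matrix argument, $g_j$ is continuous on $\rt_{r_1,\ldots,r_p}$. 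Here I would stress the key point: although the truncated-SVD minimizer map $[\,\cdot\,]_{r_j}$ can be discontinuous precisely when $\sigma_{r_j}=\sigma_{r_j+1}$ (cf. (\ref{rjajrcj})), the optimal \emph{value} $g_j$ is not, as it depends only on sums of squared singular values. Continuity of $f$ and of $g_j$ then lets me take $q\to\infty$ in $f(\bfF^{(q)}) - g_j(\bfF^{(q)})\to 0$, yielding $f(\check{\bfF}) = g_j(\check{\bfF})$ for every $j$. This says exactly that $\check F_j$ minimizes $f$ over the $j$th block with the other blocks frozen at $\check{\bfF}$, i.e. $\check{\bfF}$ satisfies (\ref{fjarg}) and is a coordinate-wise minimum. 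The main obstacle, and the point meriting the most care, is this continuity of the value function $g_j$: it must be argued through the singular-value formula rather than through the possibly discontinuous minimizer, so that the vanishing block-improvements genuinely certify optimality of each frozen block at the limit.
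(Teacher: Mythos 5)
Your proof is correct, but it follows a genuinely different route from the paper's. The paper's argument avoids the value function entirely: it extracts a convergent subsequence $\{{\bfF}^{(q_s)}\}$ (using closedness of each $\rt_{r_j}$ and boundedness of the iterates), fixes the best response $\po_j^{{\bfF}^*}$ computed \emph{at the limit point} as a feasible competitor in block $j$, and sandwiches $f(F_1^{(q_s)},\ldots,\po_j^{{\bfF}^*},\ldots,F_p^{(q_s)})$ above the MBI update value and hence above $f({\bfF}^{(q_{s+1})})$ via monotonicity; letting $s\to\infty$ then requires only continuity of $f$ itself, never of a minimizer map or of a minimum value. Your proof instead works with the full sequence, quantifies the per-block improvement by the gap $f({\bfF}^{(q)})-g_j({\bfF}^{(q)})$, observes that the max-block selection rule forces all $p$ gaps to vanish simultaneously, and then certifies block optimality of the limit by proving continuity of the value function $g_j$ through the closed-form Eckart--Young expression $g_j({\bfF})=\|S_j(I-R_{G_j})\|^2+\sum_{i>r_j}\sigma_i^2(S_jR_{G_j})$. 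What the paper's route buys is economy and generality: it needs no closed form for the subproblem value, so it works verbatim for any block scheme whose subproblems are solved exactly and whose objective is continuous, and it actually shows that \emph{every} accumulation point is a coordinate-wise minimum, so existence of the full limit in (\ref{bfinfty}) is not essential to it. What your route buys is a quantitative statement---each gap $f({\bfF}^{(q)})-g_j({\bfF}^{(q)})$ is dominated by the consecutive decrease $f({\bfF}^{(q)})-f({\bfF}^{(q+1)})$---and your insistence that continuity must be argued through the singular-value formula for the \emph{value}, rather than through the truncated-SVD \emph{minimizer} (which can indeed be discontinuous when $\sigma_{r_j}=\sigma_{r_j+1}$, cf. (\ref{rjajrcj})), is exactly the point a careless version of this argument would get wrong; both proofs are valid.
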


\begin{IEEEproof}
For each fixed ${\bfF}=(F_1,...,F_p)$,  a so-called best response matrix to matrix $F_j$ is denoted by  $\po_j^{\mbox{\scriptsize\boldmath $F$}}$, where
$$\po_j^{\mbox{\scriptsize\boldmath $F$}}\in\left\{\arg\min_{F_j\in{\mathbb R}_{r_j}}\;f(F_1,...F_{j-1},F_j,F_{j+1},...,F_p)\right\}.$$
Let $\{{\bfF}^{(q)}\}$ be a sequence generated by Algorithm 1, where ${\bfF}^{(q)}=(F_1^{(q)},...,F_p^{(q)})$.
 Since each $\rt_{r_j}$ is  closed  \cite[p. 304]{tu2007introduction}, there is a subsequence $\{{\bfF}^{(q_s)}\}$ such that $(F_1^{(q_s)},...,F_p^{(q_s)})\rightarrow(F_1^{*},...,F_p^{*})={\bfF}^*$ as $s\rightarrow\infty$. Then, for any $j=1,...,p$, we have
\begin{eqnarray*}
f(F_1^{(q_s)},...,F_{j-1}^{(q_s)},\po_j^{{\mbox{\scriptsize\boldmath $F$}}^*},F_{j+1}^{(q_s)},...,F_p^{(q_s)})& \geq & f(F_1^{(q_s)},...,F_{j-1}^{(q_s)},\po_j^{{
{\mbox{\scriptsize\boldmath $F$}}}^{(q_s)}},F_{j+1}^{(q_s)},...,F_p^{(q_s)})\\
                                                                                & \geq & f(F_1^{(q_s+1)},...,F_{j-1}^{(q_s+1)},F_{j}^{(q_s+1)},F_{j+1}^{(q_s+1)},...,F_p^{(q_s+1)})\\
                                                                                & \geq & f(F_1^{(q_{s+1})},...,F_{j-1}^{(q_{s+1})},F_{j}^{(q_{s+1})},F_{j+1}^{(q_{s+1})},...,F_p^{(q_{s+1})})
\end{eqnarray*}
By continuity, when $s\rightarrow\infty$,
$$f(F_1^{*},...,F_{j-1}^{*},\po_j^{{\mbox{\scriptsize\boldmath $F$}}^*},F_{j+1}^{*},...,F_p^{*})\geq f(F_1^{*},...,F_{j-1}^{*},F_{j}^{*},F_{j+1}^{*},...,F_p^{*}),$$
which implies that above should hold as an equality, since the inequality is true by the definition of the best response matrix $\po_j^{{\mbox{\scriptsize\boldmath $F$}}^*}$. Thus, $F_j^*$ is such as in (\ref{fjarg}), i.e.  $F_j^*$ is a  solution of the problem
$$\min_{F_j\in\mathcal{R}_{r_j}}\;f(F_1^*,...,F_{j-1}^*,F_j,F_{j+1}^*,...,F_p^*),\;\forall j=1,...,p.$$
\end{IEEEproof}

\begin{remark}
Theorem \ref{th} still holds if the objective function is defined as  $f(\widetilde{\bfF}) = \left\|\widetilde{H} - \displaystyle\sum_{j=1}^p \widetilde{F}_j\widetilde{G}_j \right\|^2$ where $\widetilde{\bfF}=(\widetilde{F}_1,\ldots,\widetilde{F}_p)$, and $\widetilde{H}$ and $\widetilde{G}_j$ are defined by Remark \ref{rem2}. In this case, the coordinate-wise minimum point is defined similar to that in (\ref{fjarg}) where symbols ${\bfF}^*$, $F_j^*$ and $F_j$ should be replaced with $\widetilde{\bfF}^{*}$, $\widetilde{F}_j^{*}$ and $\widetilde{F}_j$, respectively. More precisely, if    $\widetilde{\bfF}^{(q)}$ denotes the $q$th iteration of Algorithm 1 as described in Remark \ref{rem2}, then the following is true.

\begin{corollary}
Point  $\widehat{\bfF} $ defined by
$$
\widehat{\bfF} =\lim_{q\rightarrow \infty} \widetilde{\bfF}^{(q)}
$$
is the coordinate-wise minimum of Algorithm 1 in the case when covariance matrices $E_{xy}$ and $E_{yy}$ are replaced with their
estimates $\widetilde{E}_{xy}$ and $\widetilde{E}_{yy}$, respectively.
\end{corollary}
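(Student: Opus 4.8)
The plan is to obtain the Corollary by transcribing the proof of Theorem \ref{th} essentially word for word, replacing the fixed matrices $H$ and $G_j$ everywhere by their estimate-based counterparts $\widetilde{H}$ and $\widetilde{G}_j$ (defined in Remark \ref{rem2}), and the iterates $F_j^{(q)}$ by $\widetilde{F}_j^{(q)}$. The key observation is that the proof of Theorem \ref{th} never exploits any special structure of $H$ or $G_j$ beyond their being fixed matrices with finite entries; it relies only on four ingredients: continuity of the objective, closedness of each variety $\rt_{r_j}$, boundedness of the generated sequence, and the monotone-improvement property enforced by the selection step of Algorithm 1. Each of these survives the substitution of the estimates unchanged, so the conclusion carries over.

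Concretely, I would first fix the estimate-based objective $f(\widetilde{\bfF}) = \|\widetilde{H} - \sum_{j=1}^p \widetilde{F}_j\widetilde{G}_j\|^2$ and, for each fixed $\widetilde{\bfF}$, define the best response matrix $\po_j^{\widetilde{\bfF}}$ exactly as in the proof of Theorem \ref{th} but with tildes. Since $\widetilde{E}_{xy}$ and $\widetilde{E}_{yy}$ are fixed finite sample estimates, both $\widetilde{H}$ and $\widetilde{G}_j$ have finite entries, so $f$ is a continuous (indeed smooth) function of $\widetilde{\bfF}$ in the Frobenius norm; the closedness of $\rt_{r_j}$ is a purely geometric fact about rank-bounded matrices and is independent of the objective. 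I would then extract a convergent subsequence $\{\widetilde{\bfF}^{(q_s)}\}$ with limit $\widetilde{\bfF}^{*} = \widehat{\bfF}$ and reproduce verbatim the chain of inequalities of Theorem \ref{th}: monotone decrease of $f$ along the iterates, guaranteed by the maximum-block-improvement selection in step 7 of Algorithm 1, followed by passage to the limit and an appeal to continuity that collapses the inequalities into equalities. This forces $\widetilde{F}_j^{*}$ to minimise $f$ in its $j$th coordinate for every $j$, i.e. $\widehat{\bfF}$ is a coordinate-wise minimum in the sense of (\ref{fjarg}) with tildes.

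The one step needing a separate check is boundedness of $\{\widetilde{\bfF}^{(q)}\}$, which in Theorem \ref{th} was inherited from (\ref{eq6}); I expect this to be the main (albeit minor) obstacle. Here it is resolved directly from the explicit form of the iterates: step 4 of Algorithm 1 produces each $\widehat{F}_j^{(q+1)}$ via a closed-form pseudo-inverse expression in $\widetilde{H}$ and $\widetilde{G}_j$, so the iterates are bounded whenever these fixed estimate matrices are, without any appeal to sublevel-set arguments. Once this boundedness is noted, the remainder of the argument is identical to the proof of Theorem \ref{th}, and the Corollary follows.
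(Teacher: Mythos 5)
Your core plan is exactly the paper's own route: the paper gives no separate proof of this Corollary, deriving it instead from the Remark that Theorem \ref{th} and its proof remain valid word for word once $H$, $G_j$, $F_j^{(q)}$ are replaced by $\widetilde{H}$, $\widetilde{G}_j$, $\widetilde{F}_j^{(q)}$ of Remark \ref{rem2} --- precisely the transcription you describe, resting on the same four ingredients (continuity of $f$, closedness of each $\rt_{r_j}$, boundedness of the iterates, and the monotone improvement enforced by the selection step of Algorithm 1), none of which uses any structure of $H$ or $G_j$ beyond their being fixed matrices.

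The one point where you depart from the paper --- your self-contained justification of boundedness --- does not hold as stated, although the error is harmless for the Corollary. Step 4 of Algorithm 1 is not a closed-form expression in $\widetilde{H}$ and $\widetilde{G}_j$ alone: it reads $\widehat{F}_j^{(q+1)} = \bigl[\widetilde{S}_j^{(q)}R_{\widetilde{G}_j}\bigr]_{r_j}\widetilde{G}_j^\dagger$ with $\widetilde{S}_j^{(q)} = \widetilde{H}-\sum_{i\neq j}\widetilde{F}_i^{(q)}\widetilde{G}_i$, so every new iterate depends on the current ones. The best immediate estimate is of the form $\|\widehat{F}_k^{(q+1)}\widetilde{G}_k\| \leq \|\widetilde{S}_k^{(q)}\| \leq \sqrt{f(\widetilde{\bfF}^{(0)})} + \|\widetilde{F}_k^{(q)}\widetilde{G}_k\|$, which permits each block to drift by a fixed amount per update and so does not by itself exclude growth that is linear in $q$; note that monotone decrease of $f$ only bounds the norm of the \emph{sum} $\sum_j \widetilde{F}_j^{(q)}\widetilde{G}_j$, and when the row spaces of the $\widetilde{G}_j$ overlap the individual blocks can in principle grow while cancelling each other. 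Fortunately no such argument is required: the Corollary, like Theorem \ref{th} via (\ref{bfinfty}), \emph{presupposes} that $\lim_{q\rightarrow\infty}\widetilde{\bfF}^{(q)}$ exists, and a convergent sequence is automatically bounded with every subsequence converging to $\widehat{\bfF}$; this is also effectively all the paper has, since its own appeal to (\ref{eq6}) for boundedness is an assertion rather than a proof. If you replace your boundedness claim by this observation, your transcription argument coincides with the paper's proof.
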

\end{remark}

\bibliographystyle{elsarticle-num}

\bibliography{references}


\begin{figure}
  \centering
  \includegraphics[width=7.7cm,height=6.3cm]{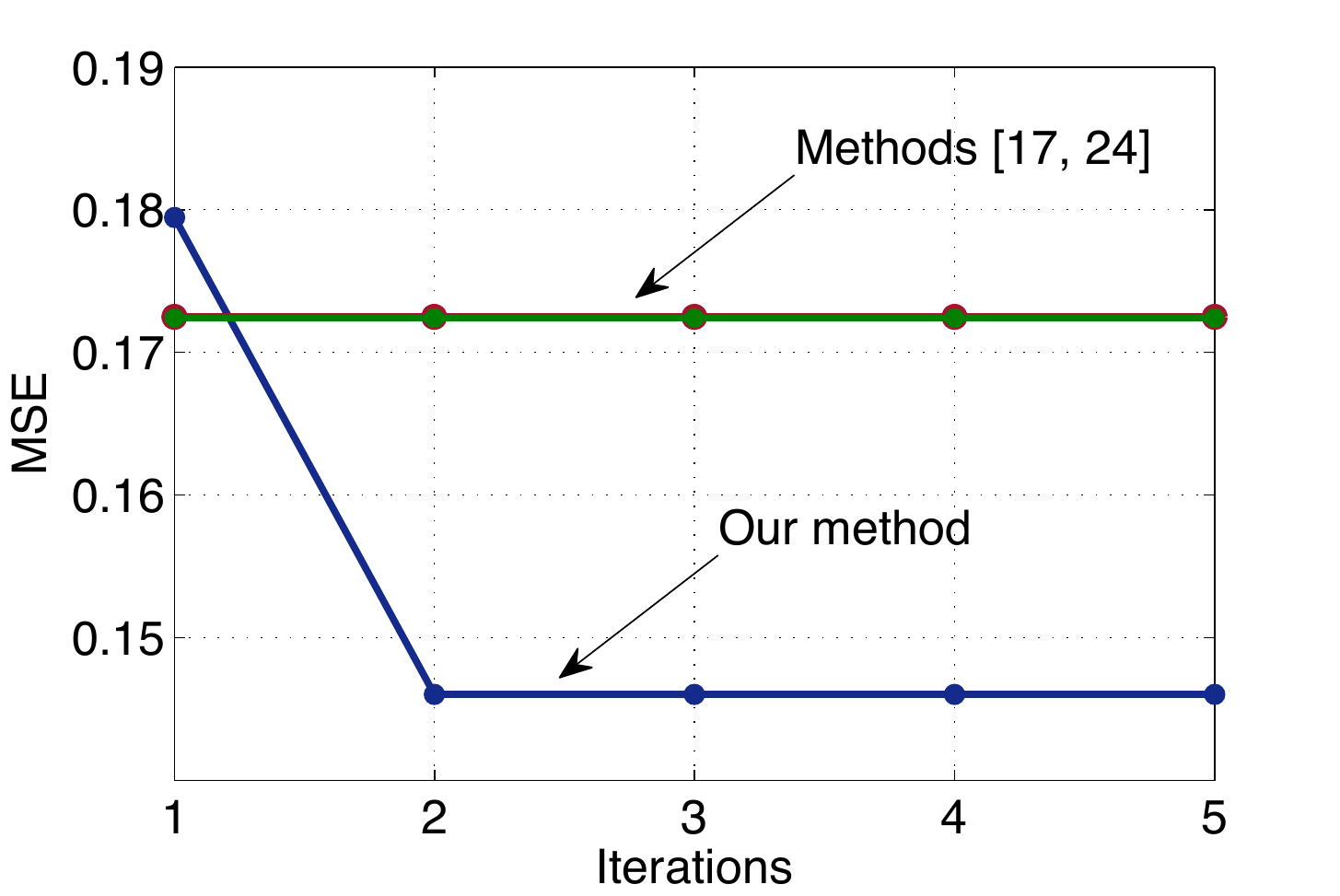}\\
  \caption{Example \ref{ex0}: Diagrams of MSE's  associated with the proposed method and the known methods versus number of iterations.}
  \label{fig01}
\end{figure}

\begin{figure}[h!]
\centering
\begin{tabular}{c@{\hspace*{5mm}}c}
\subfigure[ Example \ref{ex1}: $p=2$, $m=n_j=10$, $s=20$,   $r_j=5+j$, \newline $\sigma_j=0.2-0.1j$, for  $j=1,2,$. ]{\includegraphics[width=7.7cm,height=6.3cm]{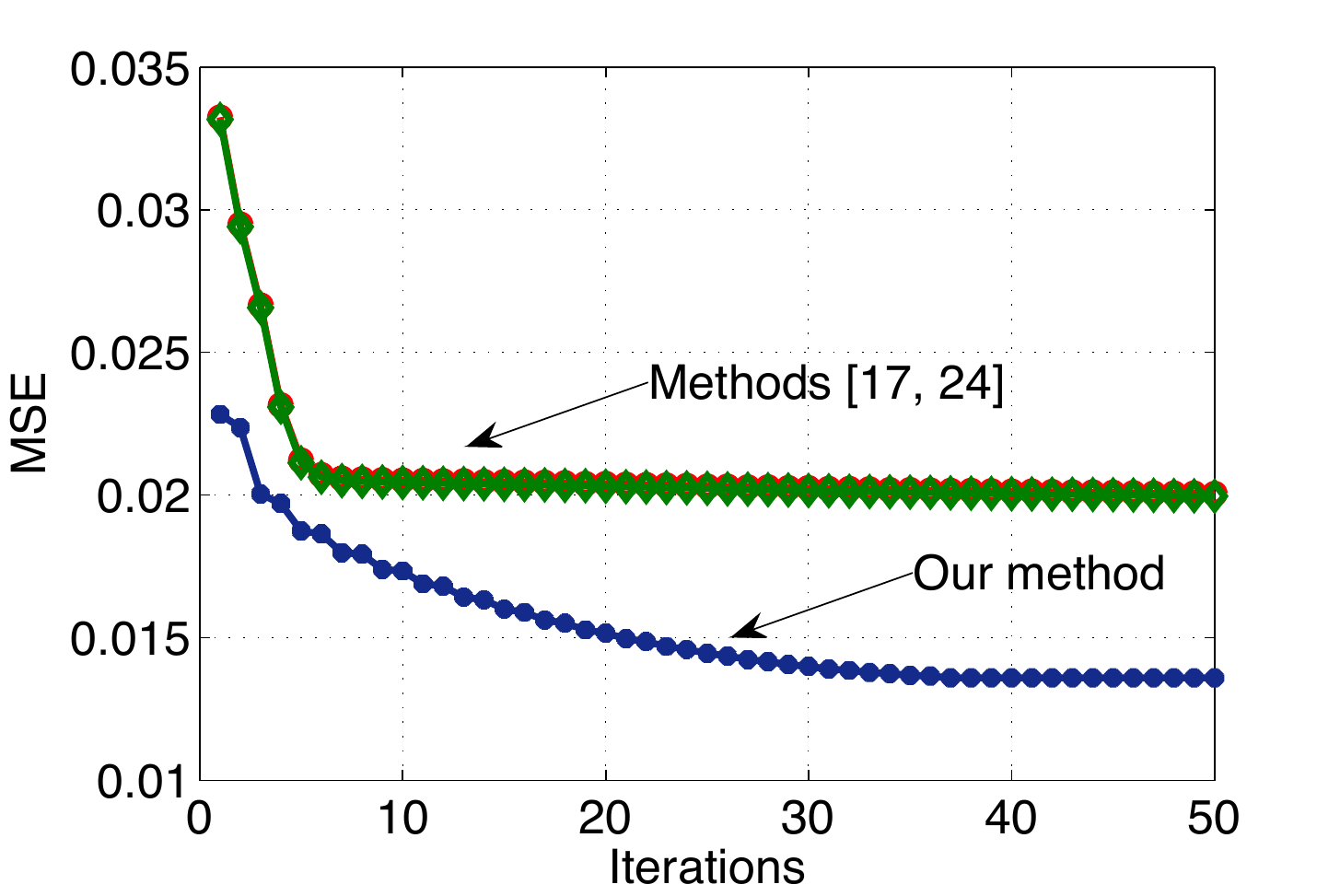}} \label{ex1:fig1a} &
\subfigure[Example \ref{ex1}: $p=3$, $m=n_j=100$, $s=10000$, $r_j=20$,
  $\sigma_j=1$, for $j=1,2,3.$]{\includegraphics[width=7.7cm,height=6.3cm]{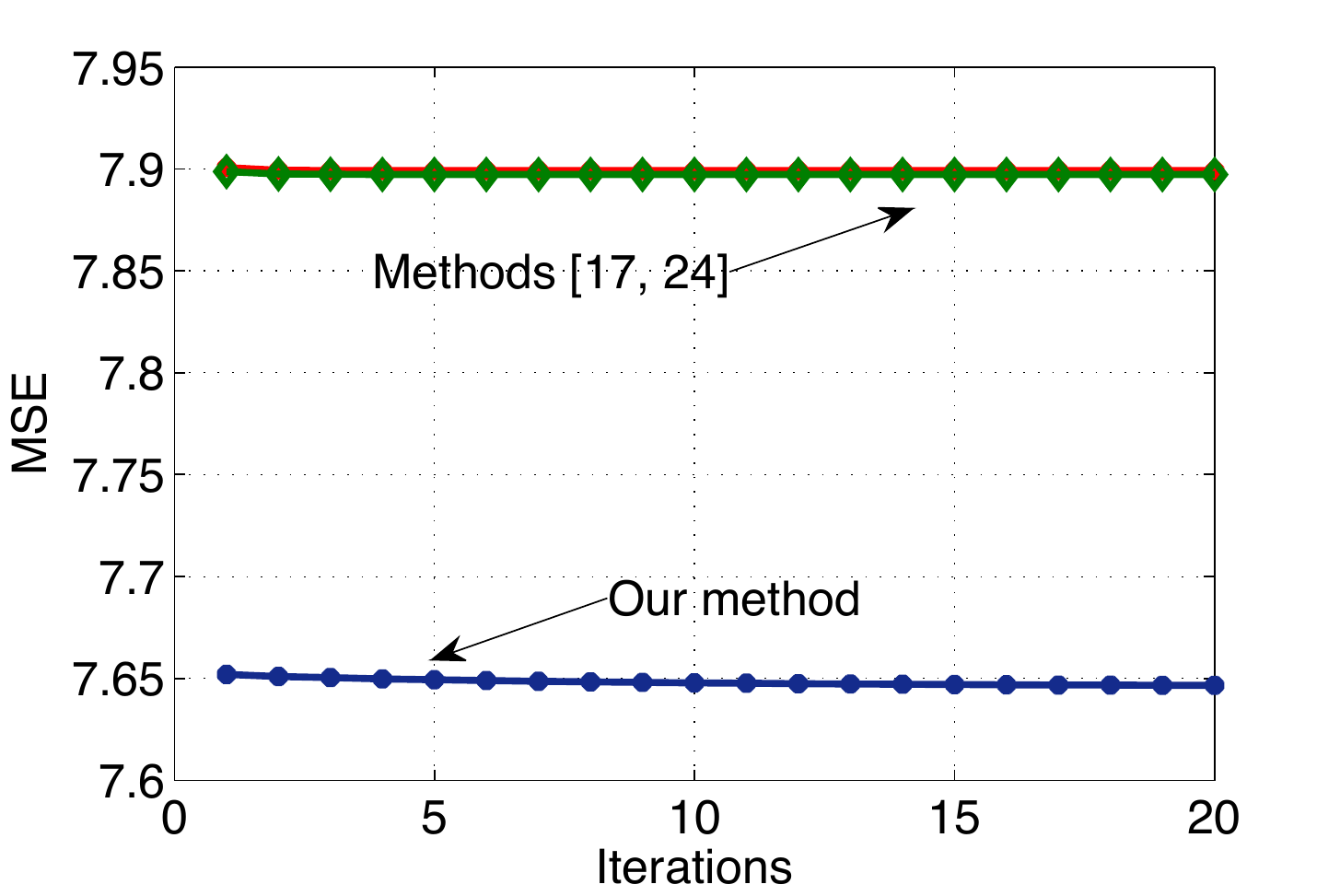}}\\
\subfigure[Example \ref{ex2}: $p=2$, $m=n_j=2$, $s=4$,  $r_j=1$, $\sigma_j=1$, for  $j=1,2$.]{\includegraphics[width=7.7cm,height=6.3cm]{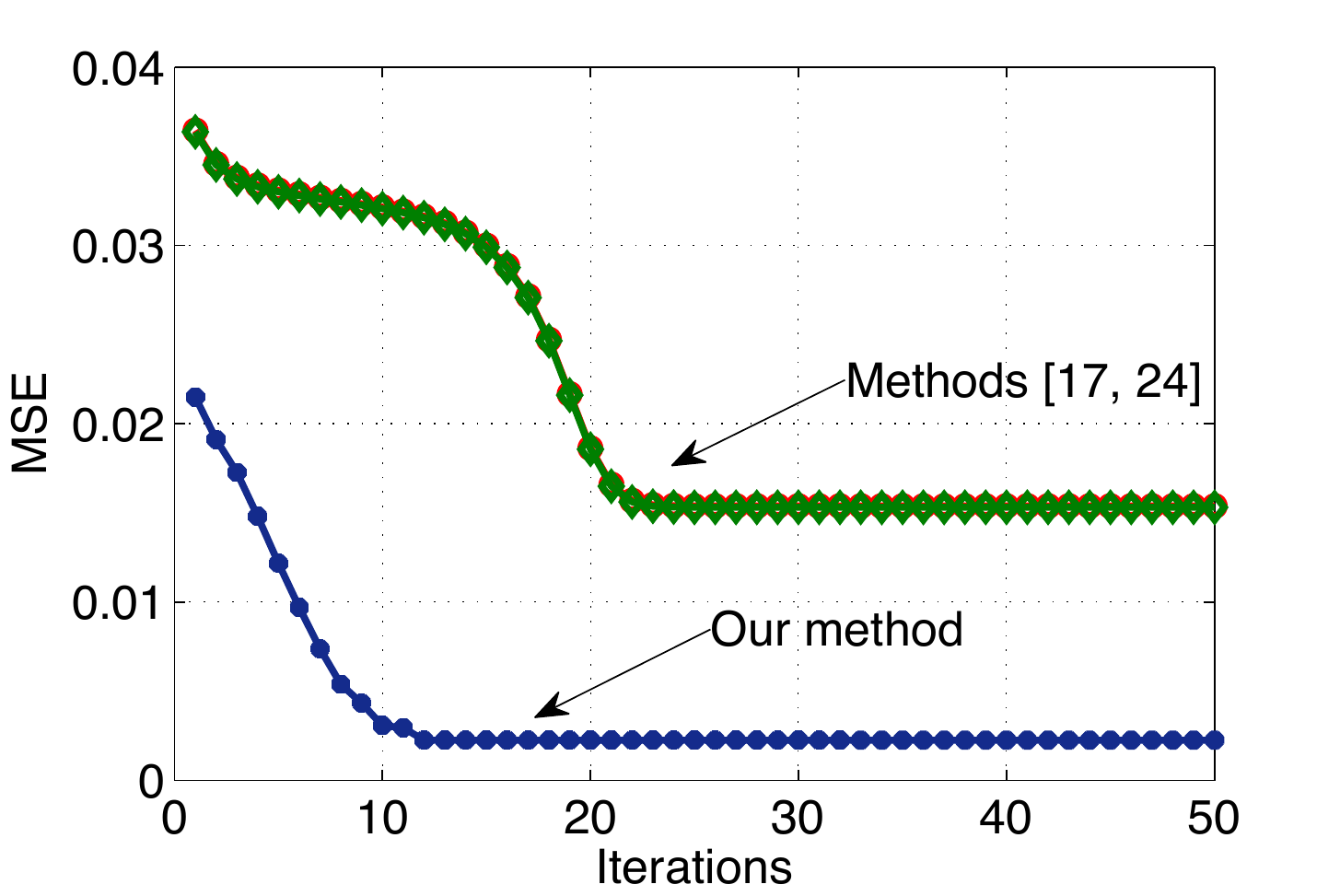}}   \label{ex2:fig3} &
\subfigure[Example \ref{ex2}: $p=2$, $m=n_j=100$, $s=250$, $r_j=20$, $\sigma_j=1$, for $j=1,2$.]{\includegraphics[width=7.7cm,height=6.3cm]{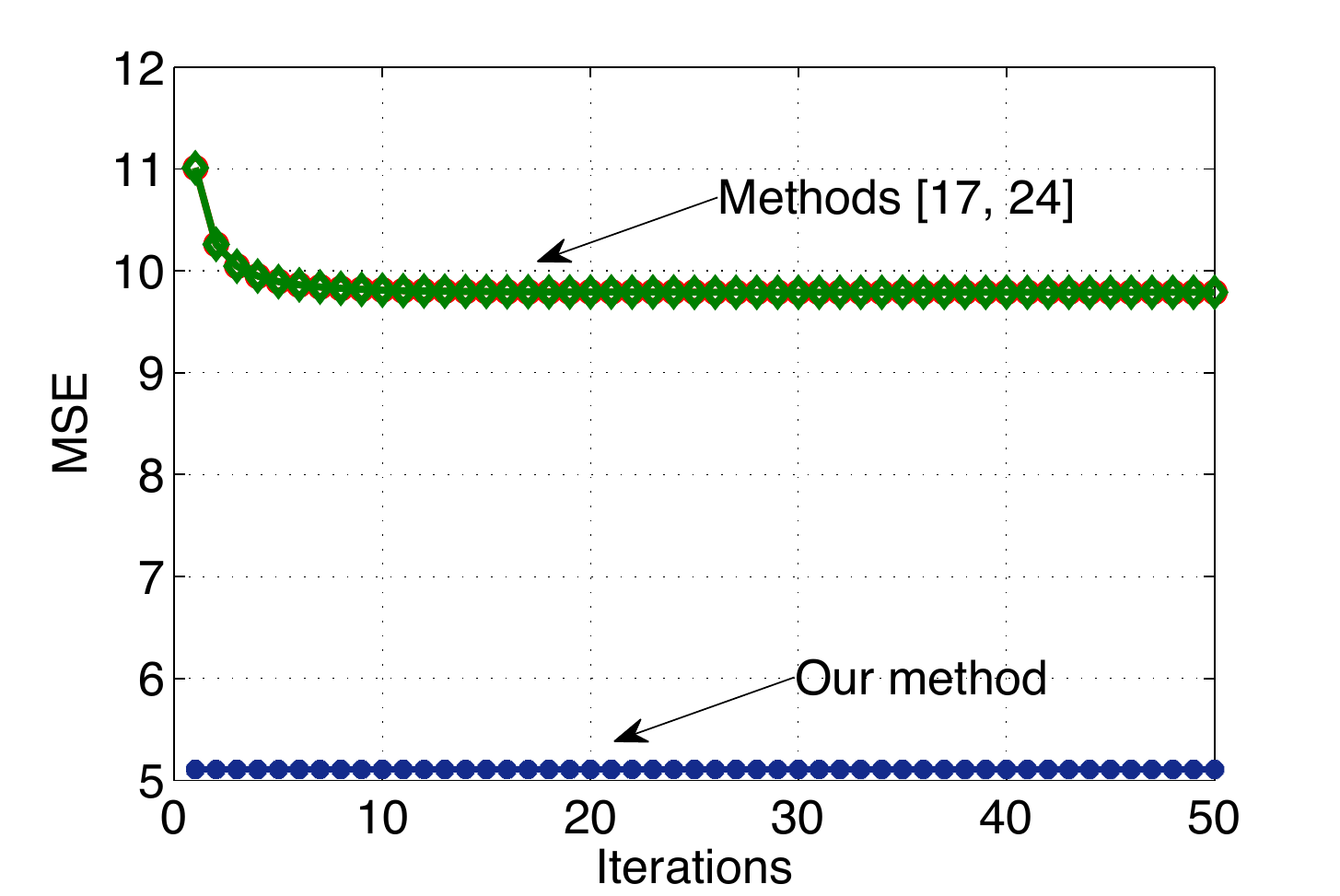}} \label{ex2:fig4a}
\end{tabular}
  \caption{Diagrams of MSE's associated with the proposed method and the known methods versus number of iterations, for $p=2$ (i.e. for two sensors) and $p=3$  (i.e. for three sensors), and different choices of signal dimensions, $m$, $n_j$, $r_j$, sample size $s$ and noise `level'  $\sigma_j$.  }
  \label{ex:fig1}
\end{figure}

\begin{figure}[h!]
\centering
\begin{tabular}{c@{\hspace*{5mm}}c}
\subfigure[Example \ref{ex2}: $p=3$, $m=n_j=100$, $s=400$, $r_j=20$, $\sigma_j=1$, for $j=1,2,3$.]{\includegraphics[,width=7.7cm,height=6.3cm]{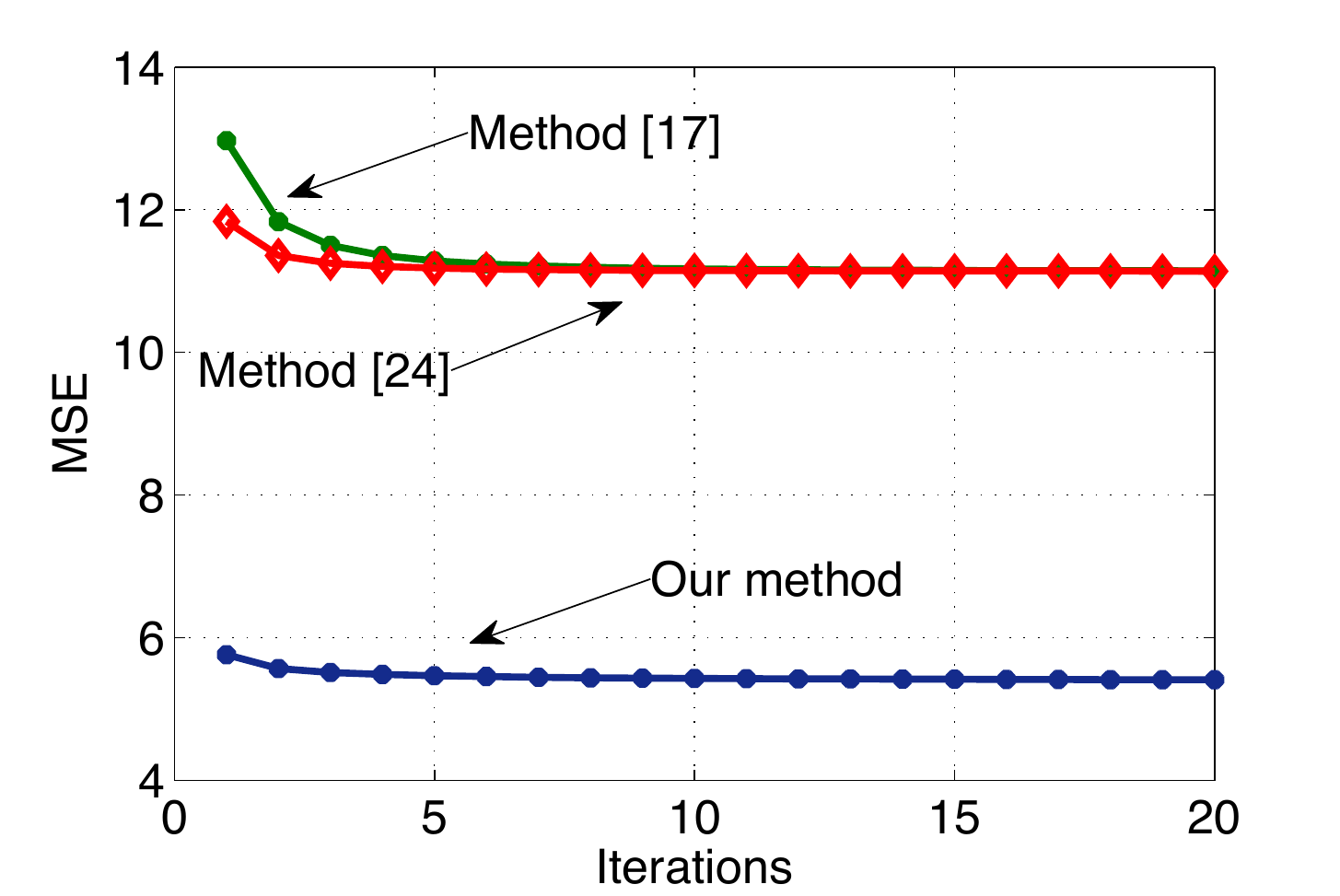}}\label{ex2:fig4b}   &
\subfigure[Example \ref{ex3}: $p=3$, $m=n_j=20$, $s=20$, $r_j=5$, $\sigma_j=0.1j$, $j=1,2,3.$]{\includegraphics[,width=7.7cm,height=6.3cm]{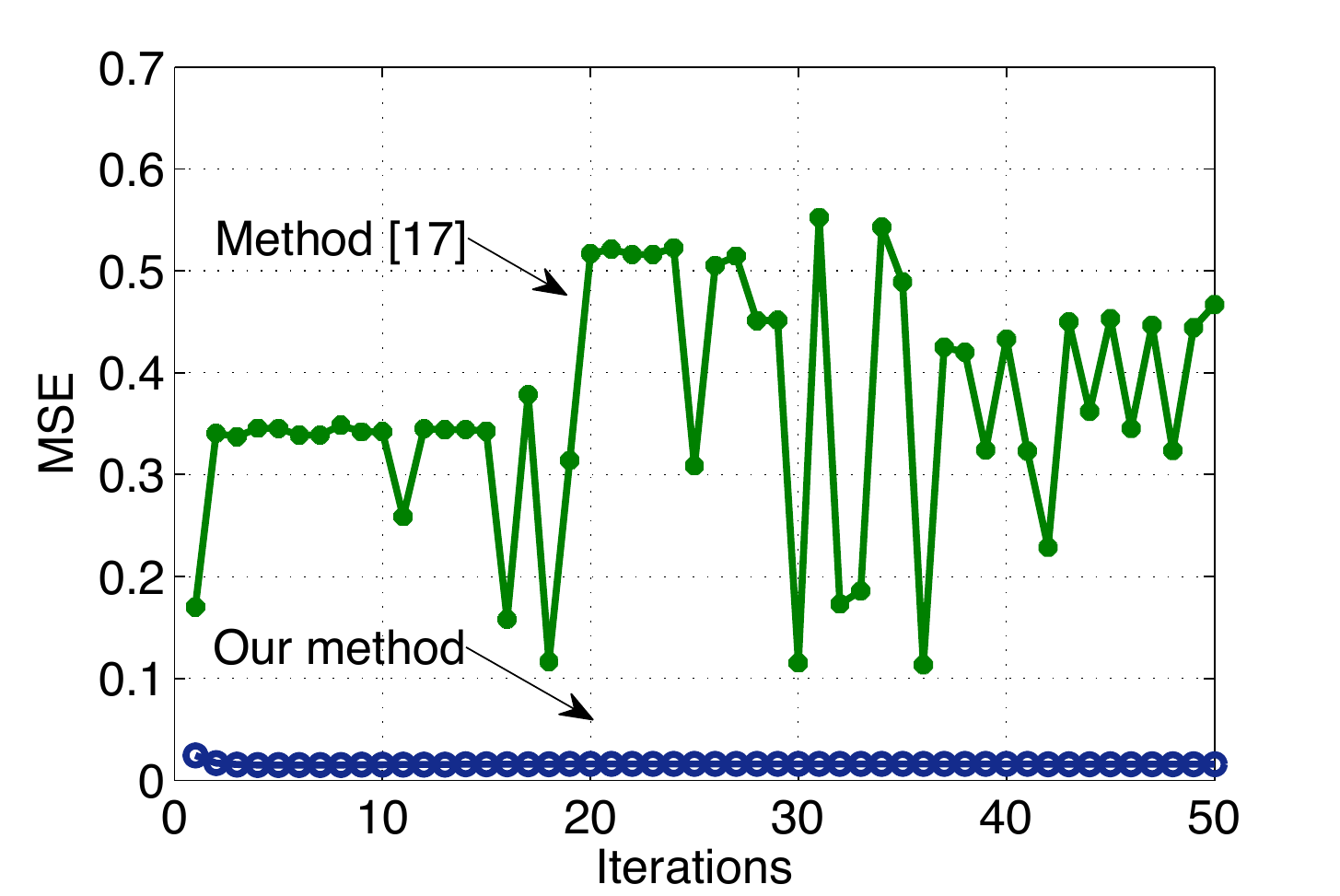}}\label{ex3:fig4a} \\
\subfigure[Example \ref{ex3}: $p=3$, $m=n_j=100$, $s=400$, $r_j=25$, $\sigma_j=0.1j,$ for $j=1,2,3$.]{\includegraphics[,width=7.7cm,height=6.3cm]{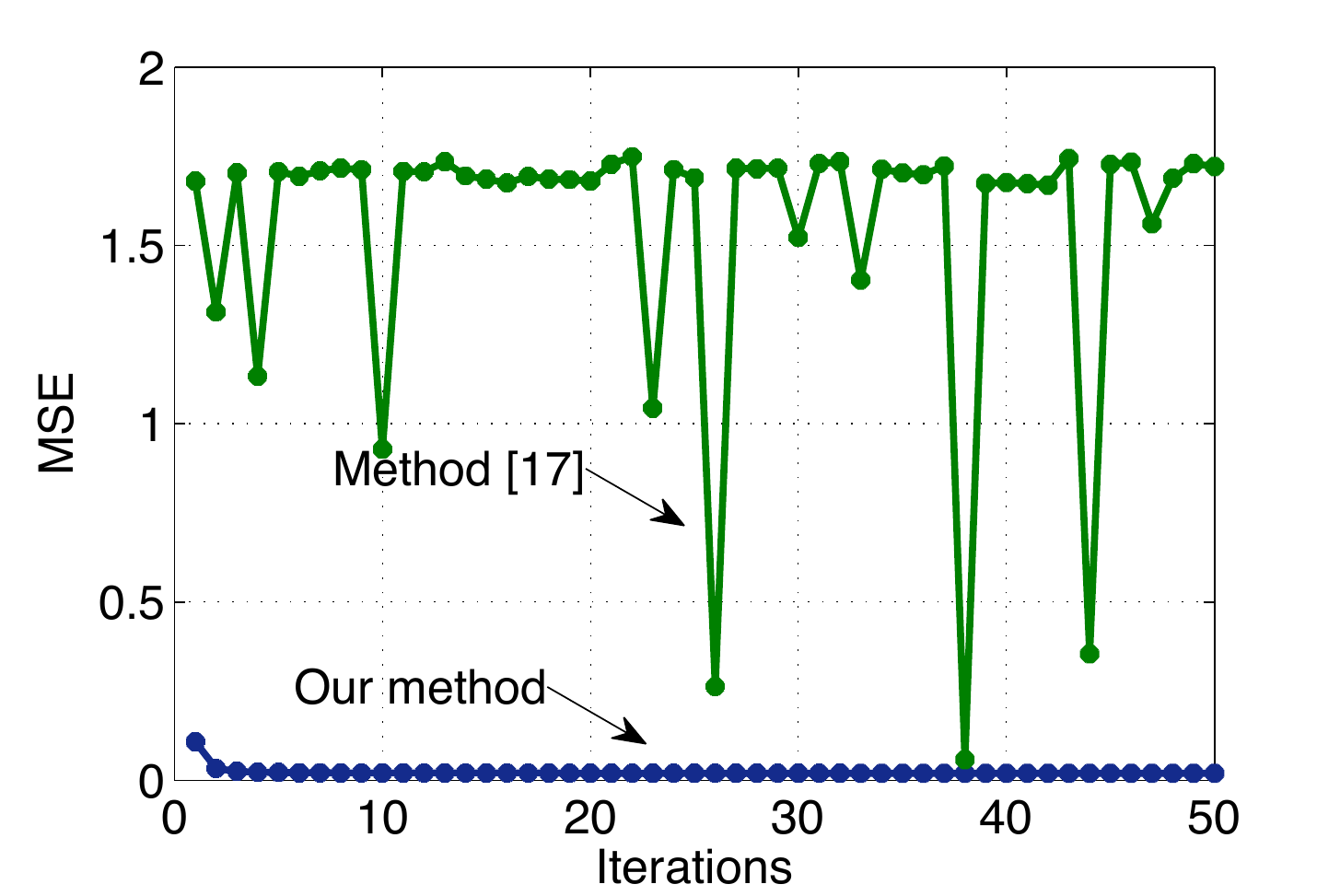}} \label{ex3:fig5b} &
\subfigure[Example \ref{ex3}: $p=2$, $m=n_j=100$, $s=50$, $r_j=20$, $\sigma_j=0.1j$,  for $j=1,2$.]{\includegraphics[width=7.7cm,height=6.3cm]{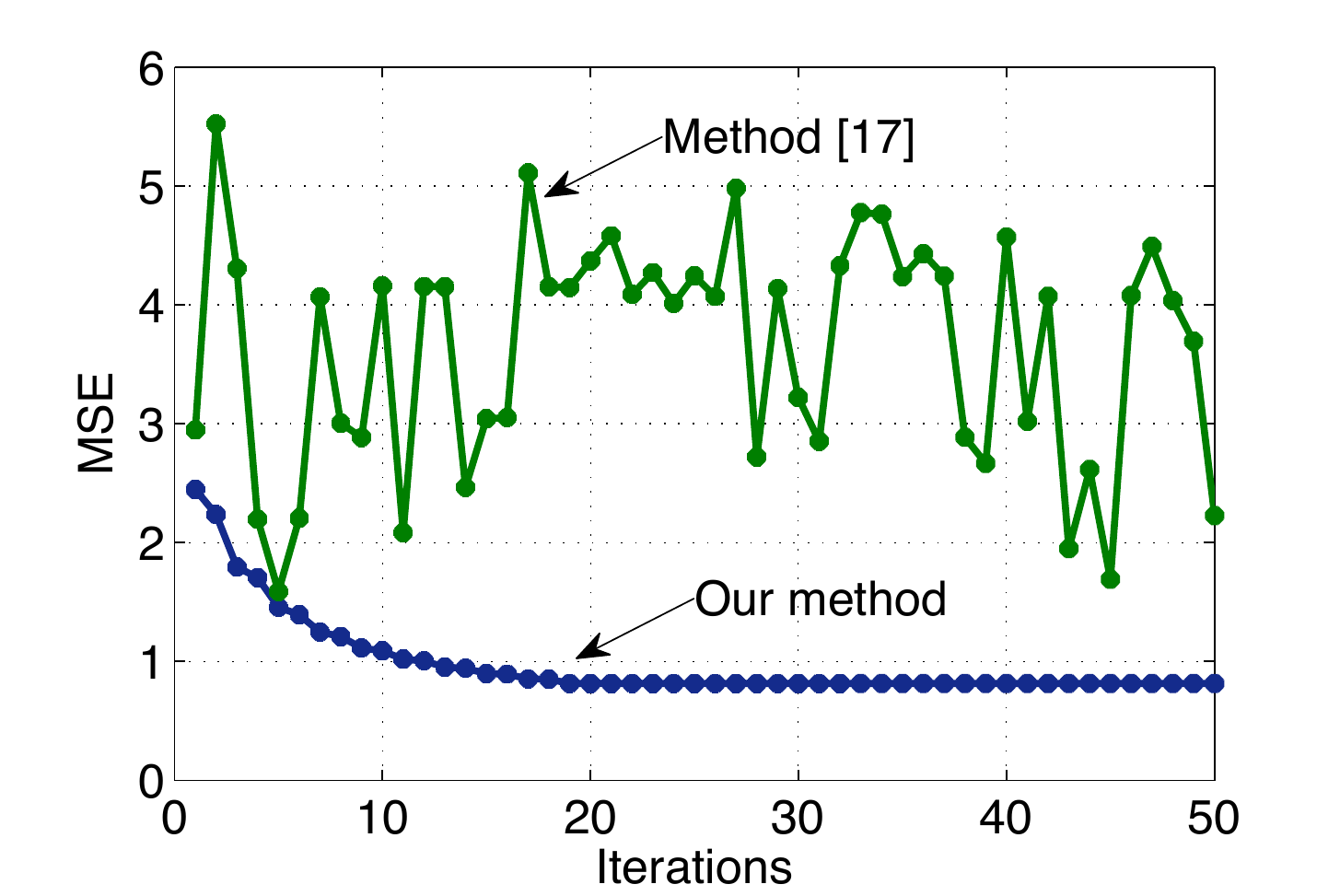}}\label{ex3:fig52}
\end{tabular}
  \caption{Diagrams of MSE's  associated with the proposed method and the known methods versus number of iterations, for $p=2$ (i.e. for two sensors) and $p=3$  (i.e. for three sensors), and different choices of signal dimensions, $m$, $n_j$, $r_j$, sample size $s$ and noise `level'  $\sigma_j$.  }
  \label{ex:fig2}
\end{figure}

\hspace*{-10mm}\begin{figure}[p]
\centering
 \vspace*{-5mm}\begin{tabular}{c@{\hspace*{5mm}}c}
\hspace*{-5mm}\vspace*{-1mm}\subfigure[Training reference signal $X.$]{\includegraphics[width=7.3cm,height=6.5cm]{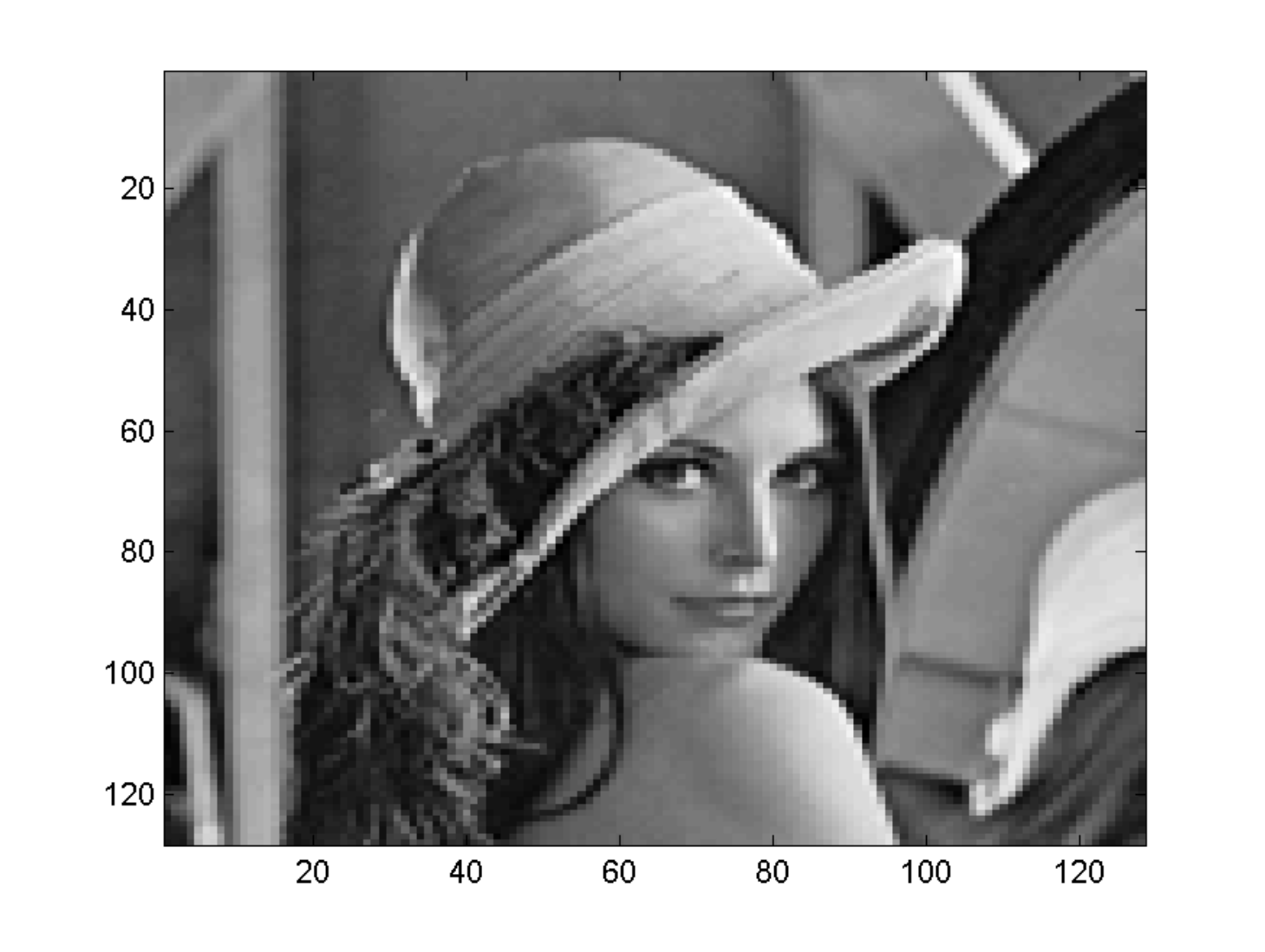}} &
\hspace*{-5mm}\vspace*{-1mm}\subfigure[Observed signal  $Y_1.$]{\includegraphics[width=7.3cm,height=6.5cm]{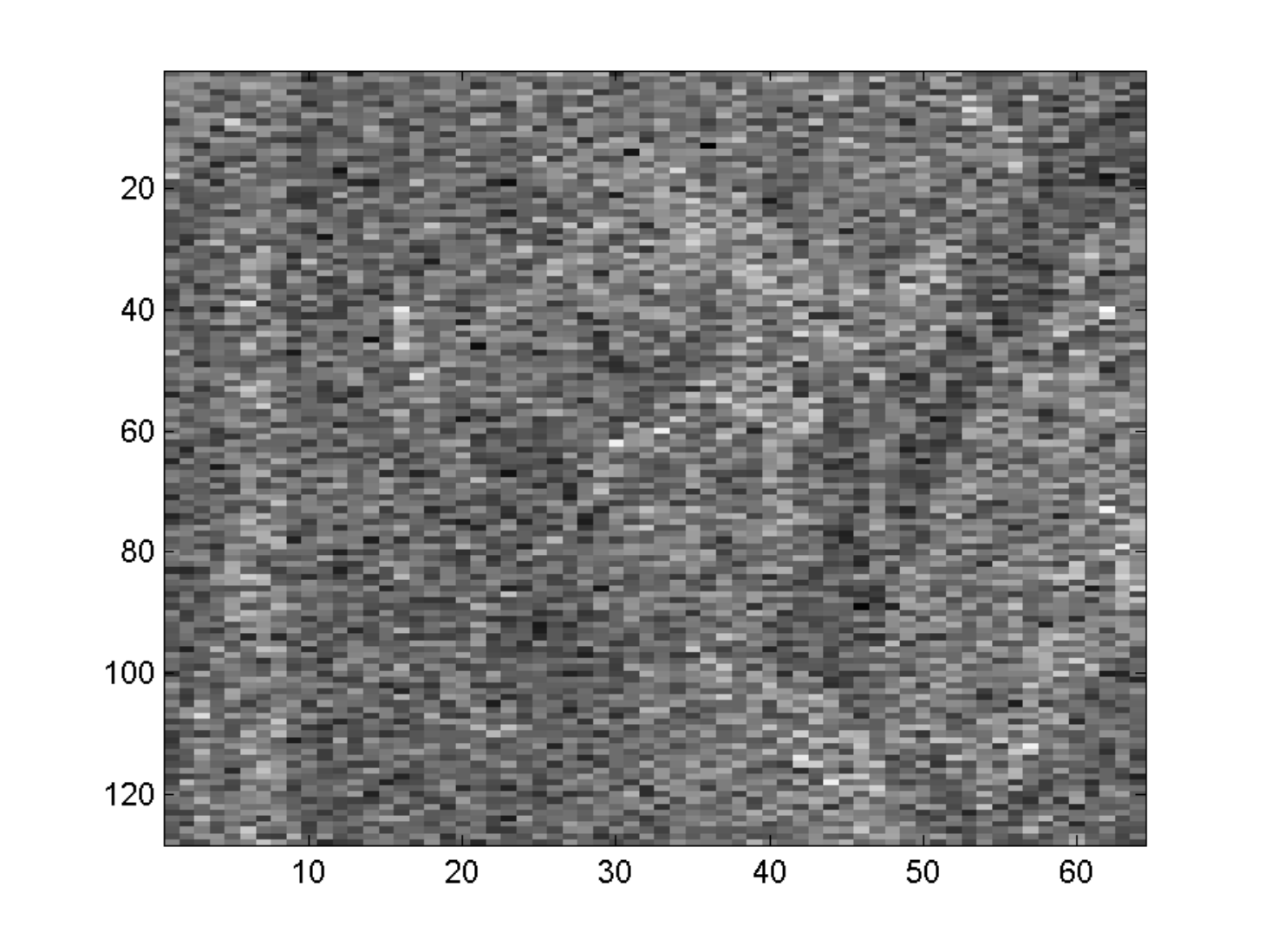}}\\
\hspace*{-5mm} \vspace*{-1mm}\subfigure[Observed signal  $Y_2.$]{\includegraphics[width=7.3cm,height=6.5cm]{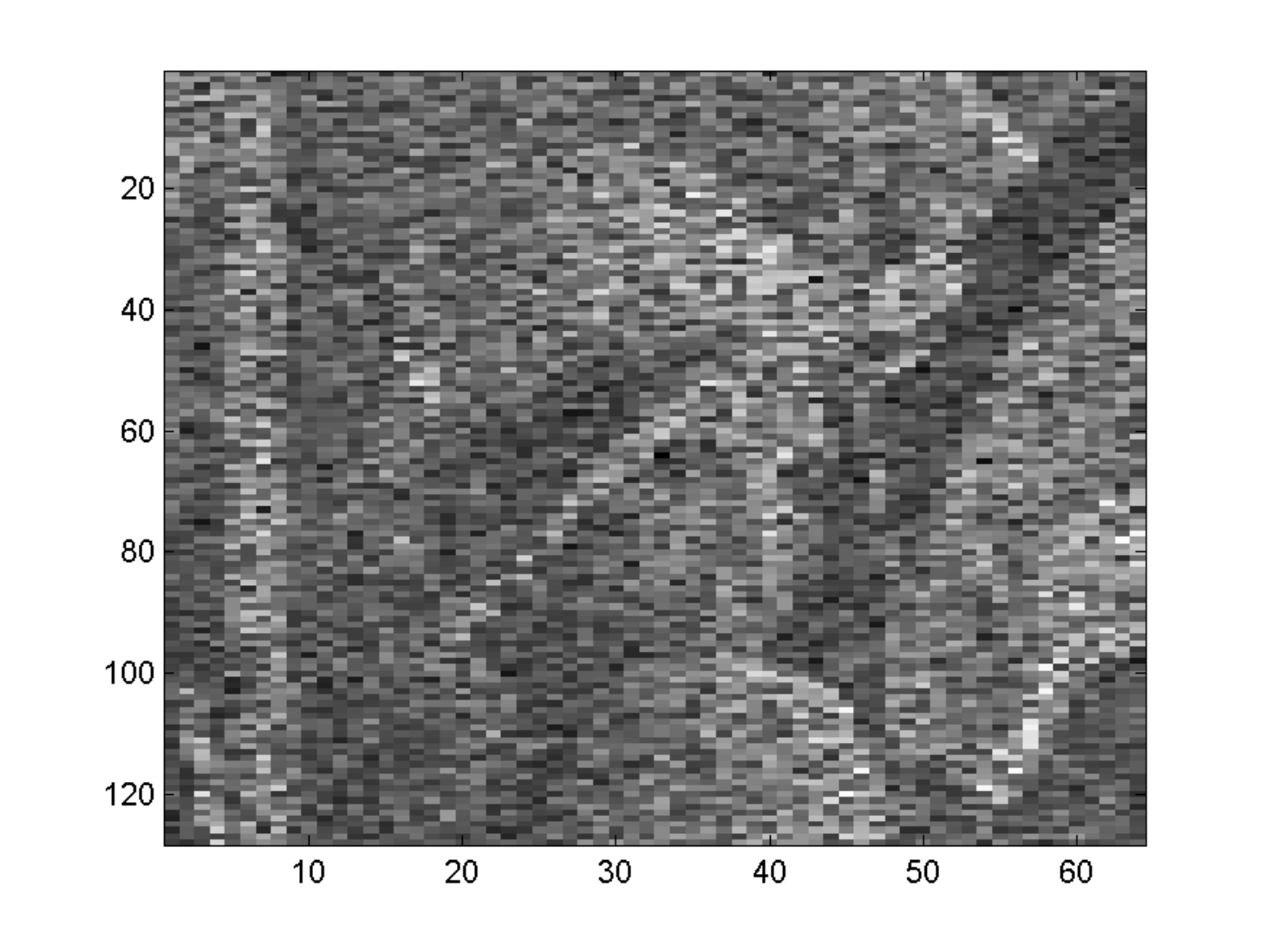}} &
\hspace*{-5mm}\vspace*{-1mm}\subfigure[Estimate of $X$ by our method.]{\includegraphics[width=7.3cm,height=6.5cm]{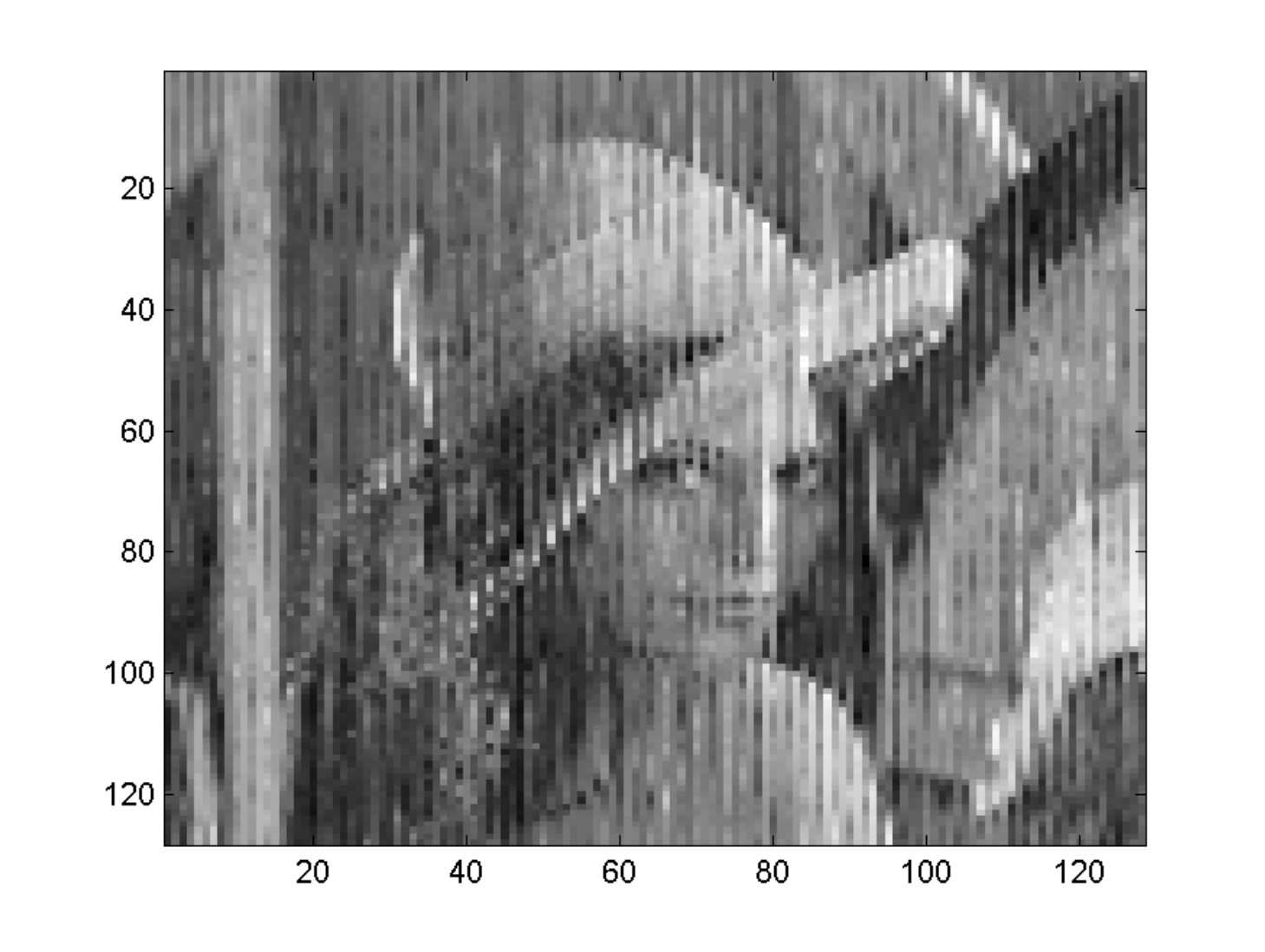}}  \\
\hspace*{-5mm}\vspace*{-1mm}\subfigure[Estimate of $X$ by method \cite{4276987}. ]{\includegraphics[width=7.3cm,height=6.5cm]{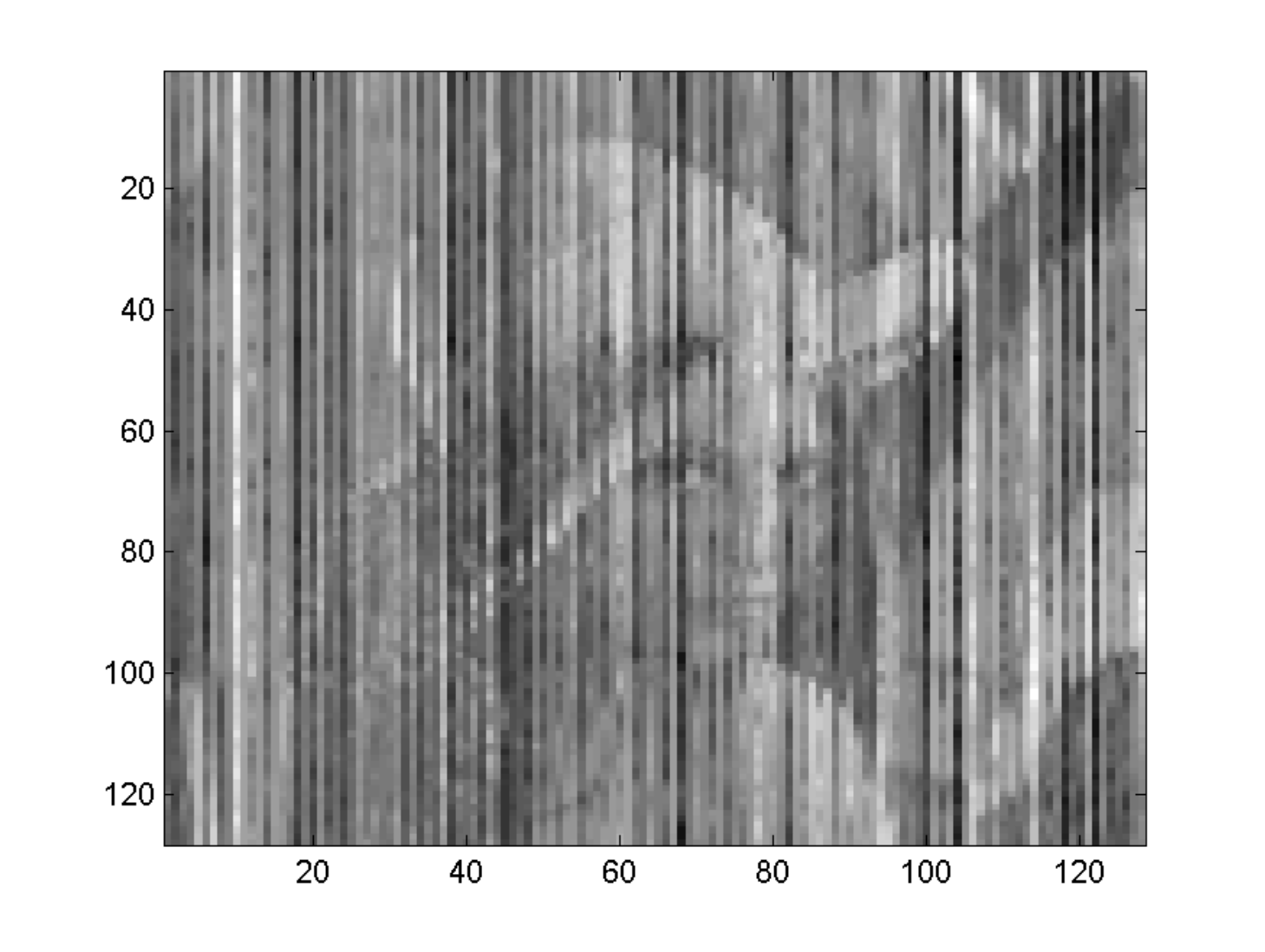}} &
\hspace*{-5mm} \vspace*{-1mm}\subfigure[Estimate of $X$ by method \cite{4475387}.]{\includegraphics[width=7.3cm,height=6.5cm]{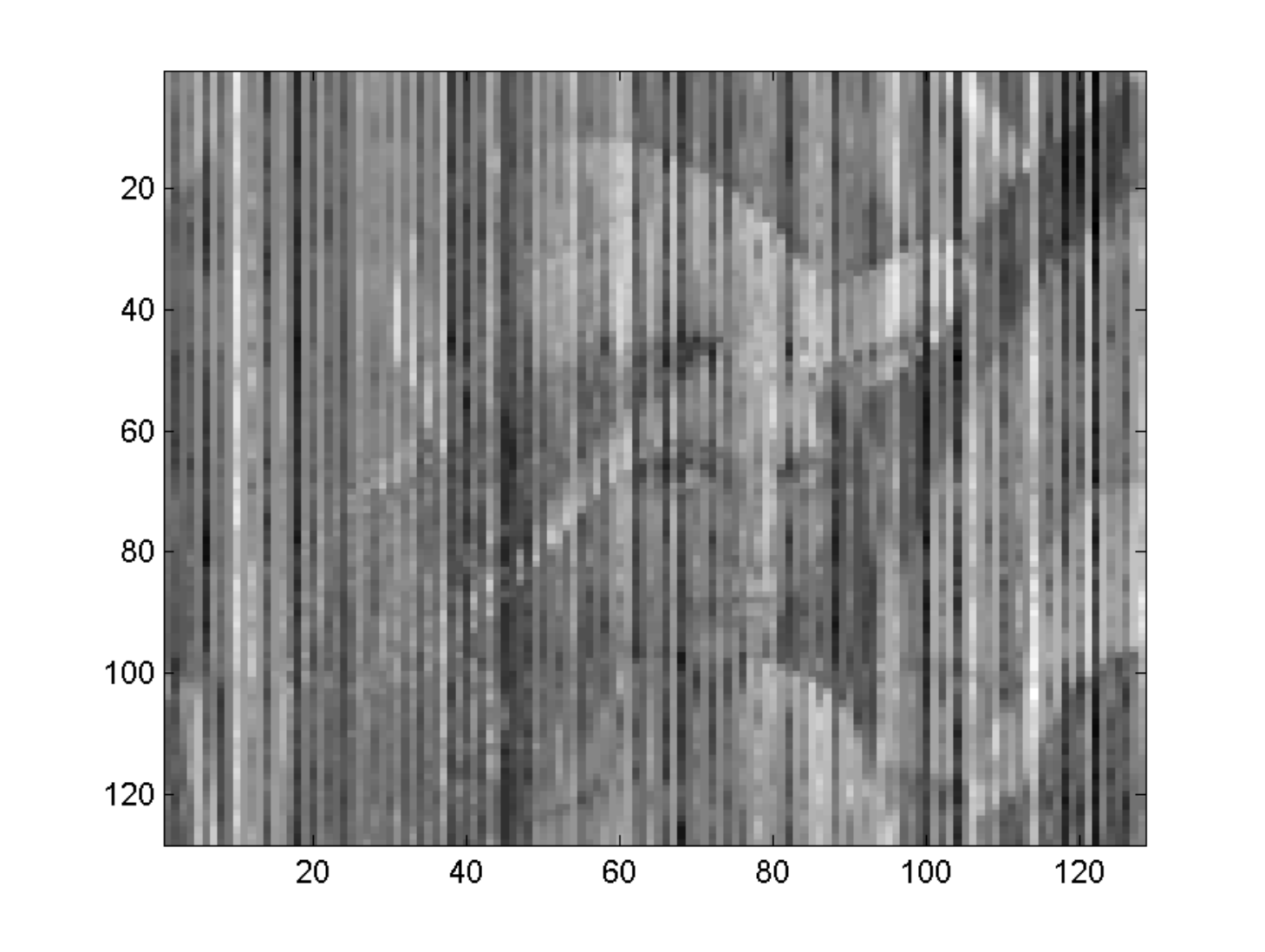}}
\end{tabular}
 \vspace*{3mm}\caption{Illustration to Example \ref{ex4}.}
 \label{ex4:fig1}
 \end{figure}

\hspace*{-10mm}\begin{figure}[p]
\centering
 \vspace*{-5mm}\begin{tabular}{c@{\hspace*{5mm}}c}
\hspace*{-5mm}\vspace*{-3mm}\subfigure[Error associated with our method.]{\includegraphics[width=12cm,height=6.5cm]{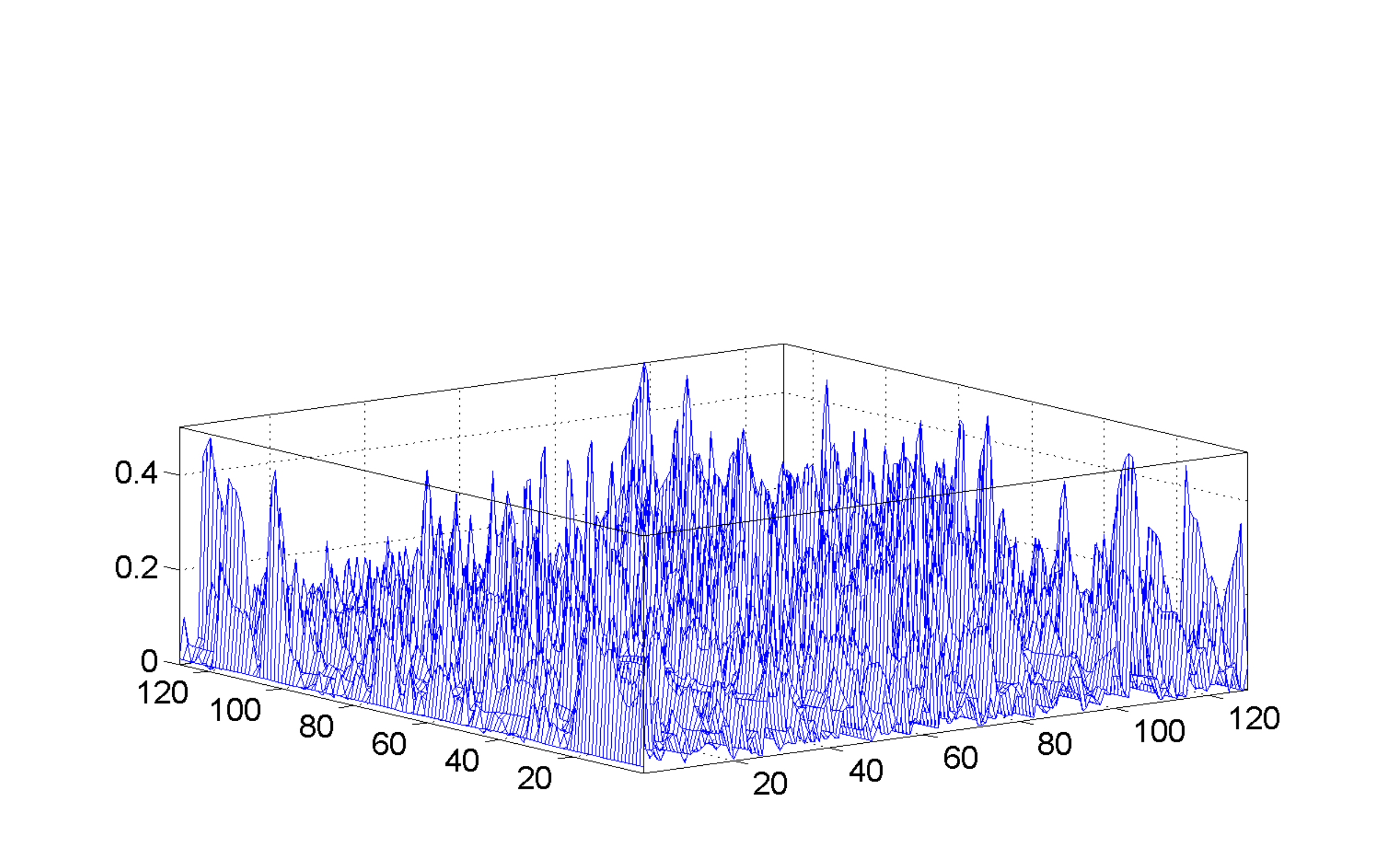}}  \\
\hspace*{-5mm}\vspace*{-3mm}\subfigure[Error associated with  method \cite{4276987}.]{\includegraphics[width=12cm,height=6.5cm]{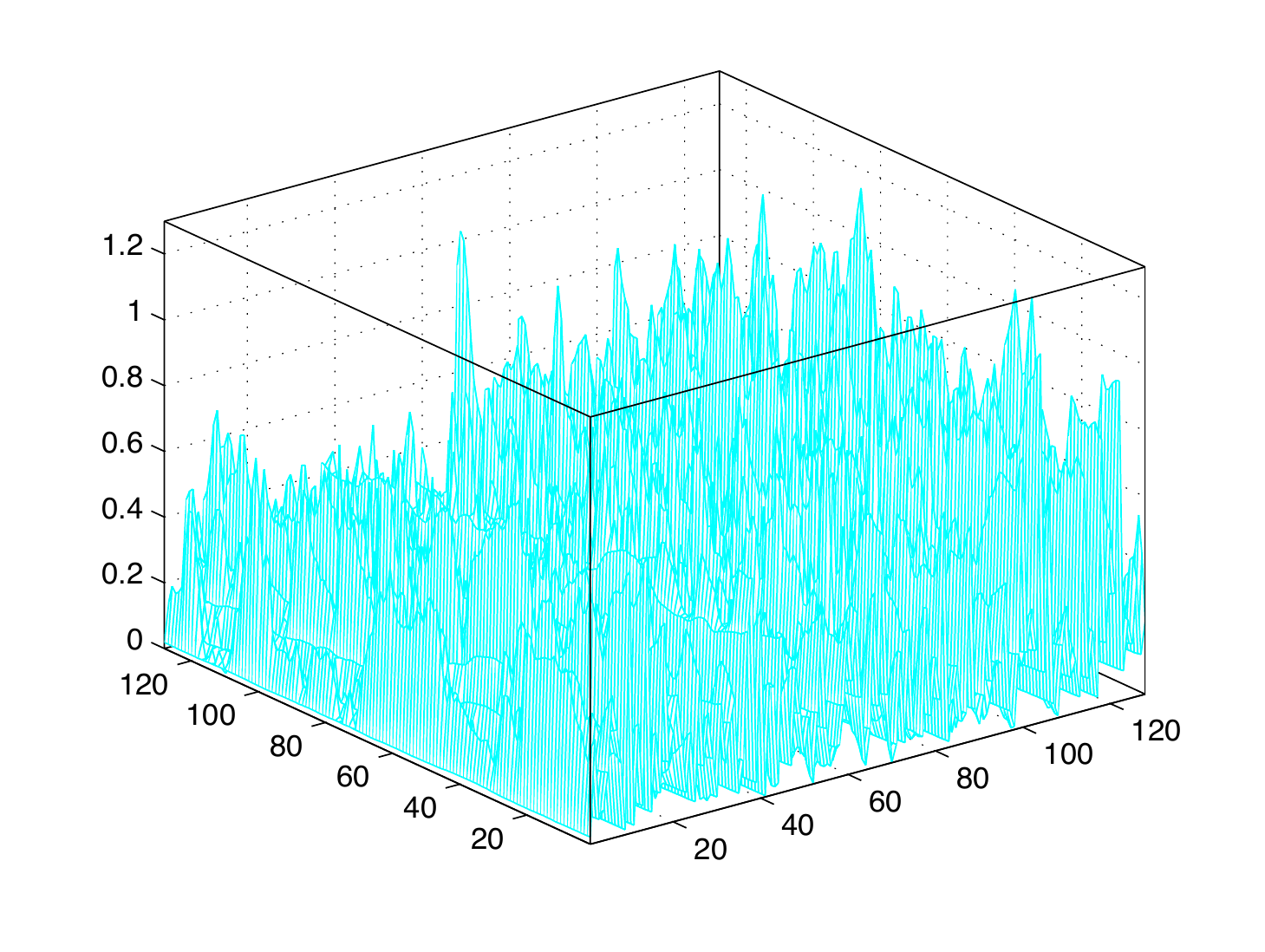}}\\
\hspace*{-5mm}\vspace*{-3mm}\subfigure[Error associated with  method \cite{4475387}. ]{\includegraphics[width=12cm,height=6.5cm]{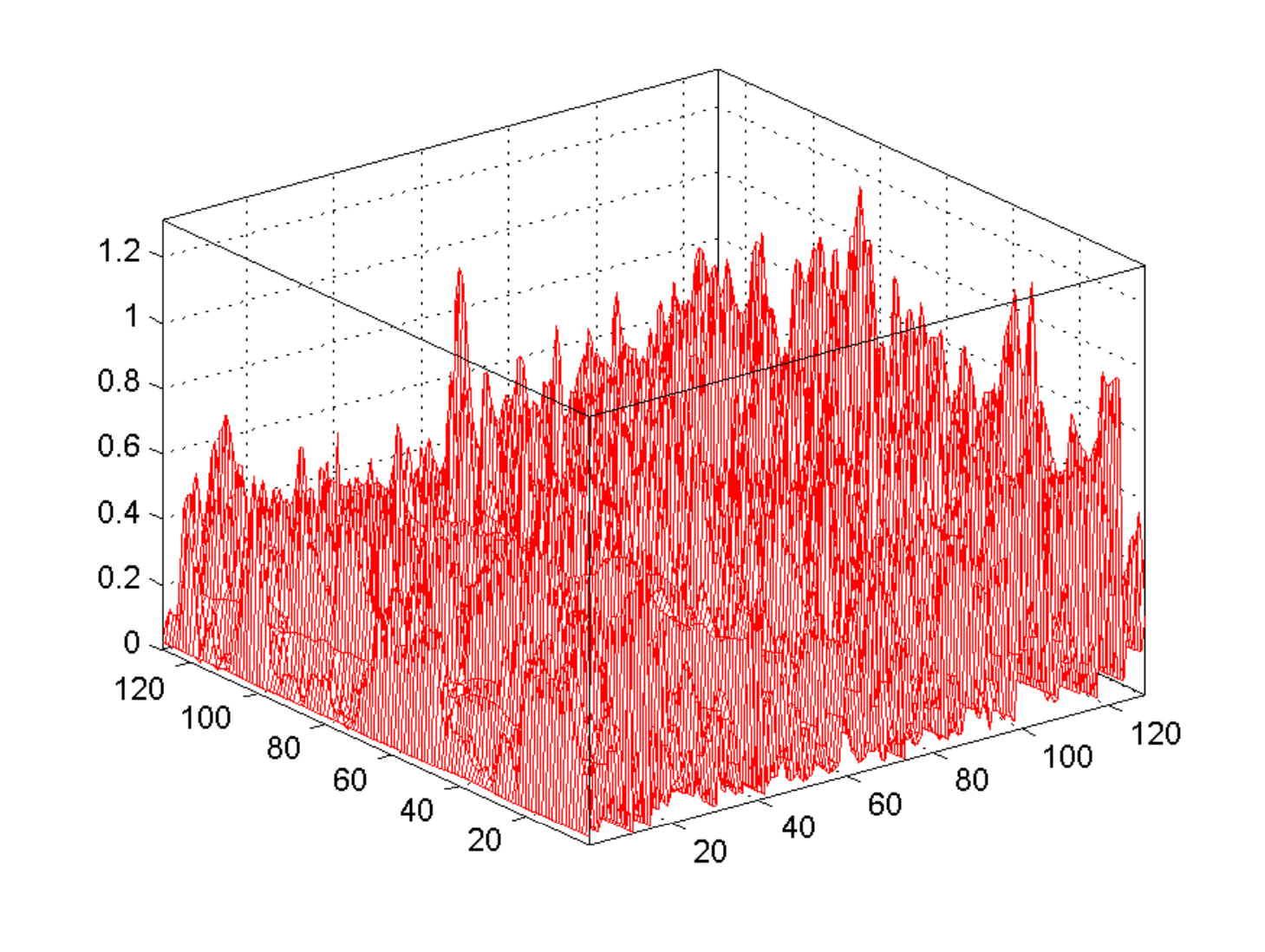}}
\end{tabular}
 \vspace*{0mm}\caption{Illustration to Example \ref{ex4}.}
 \label{ex4:fig2}
 \end{figure}

\end{document}